\newcolumntype{L}{>{$}l<{$}} 
\newtheorem{theorem}{Theorem}
\newtheorem{lemma}[theorem]{Lemma}
\newtheorem{corollary}[theorem]{Corollary}
\newtheorem{conjecture}[theorem]{Conjecture}
\newtheorem{definition}[theorem]{Definition}
\newtheorem{example}[theorem]{Example}
\newcommand{\eqref}[1]{(\ref{#1})}
\newcommand{\bd}[1]{\boldsymbol{ #1 }}
\newcommand{\BDv}[3]{\mbox{$\scriptsize{\left(\!\!\begin{array}{c}#1\\#2 \\#3\end{array}\!\!\right)}$}}
\renewcommand\bra[1]{{\langle{#1}|}}
\renewcommand\ket[1]{{|{#1}\rangle}}
\def\RR{\mathbbm{R}}
\def\CC{\mathbbm{C}}
\def\H{\mathcal{H}}
\def\Ho{\mathcal{H}_1}
\def\HN{\wedge^N[\Ho]}
\def\kpsi {\ket{\Psi}}
\begin{document}

\title[]{Implications of pinned occupation numbers for natural orbital expansions. I: Generalizing the concept of active spaces}

\author{Christian Schilling}
\ead{c.schilling@lmu.de}
\address{Department of Physics, Arnold Sommerfeld Center for Theoretical Physics, Ludwig-Maximilians-Universit\"at M\"unchen, Theresienstrasse 37, 80333 M\"unchen, Germany}
\address{Clarendon Laboratory, University of Oxford, Parks Road, Oxford OX1 3PU, United Kingdom}

\author{Carlos L. Benavides-Riveros}
\address{Institut f\"ur Physik, Martin-Luther-Universit\"at Halle-Wittenberg, 06120 Halle (Saale), Germany}


\author{Alexandre Lopes}
\address{Carl Zeiss SMT GmbH, Rudolf-Eber-Straße 2, 73447 Oberkochen, Germany}

\author{Tomasz Maci\k{a}\.zek}
\address{School of Mathematics, University of Bristol, Bristol BS8 1TW, UK}
\address{Center for Theoretical Physics, Polish Academy of Sciences, Al.~Lotnik\'ow 32/46, 02-668 Warszawa, Poland}

\author{Adam Sawicki}
\address{Center for Theoretical Physics, Polish Academy of Sciences, Al.~Lotnik\'ow 32/46, 02-668 Warszawa, Poland}

\date{\today}

\begin{abstract}
The concept of active spaces simplifies the description of interacting quantum many-body systems by restricting to a neighbourhood of active orbitals around the Fermi level. The respective wavefunction ansatzes which involve all possible electron configurations of active orbitals can be characterized by the saturation of a certain number of Pauli constraints $0 \leq n_i \leq 1$, identifying the occupied core orbitals ($n_i=1$) and the inactive virtual orbitals ($n_j=0$). In Part I, we generalize this crucial concept of active spaces by referring to the generalized Pauli constraints. To be more specific, we explain and illustrate that the saturation of any such constraint on fermionic occupation numbers characterizes a distinctive set of active electron configurations. A converse form of this selection rule establishes the basis for corresponding  multiconfigurational wavefunction ansatzes. In Part II, we provide rigorous derivations of those findings. Moroever,
we extend our results to non-fermionic multipartite quantum systems, revealing that extremal single-body information has always strong implications for the multipartite quantum state. In that sense, our work also confirms that pinned quantum systems define new physical entities and the presence of pinnings reflect the existence of (possibly hidden) ground state symmetries.
\end{abstract}


\maketitle

\section{Introduction}\label{sec:intro}
At first sight, an accurate description of fermionic quantum many-body systems seems to be highly challenging, if not impossible: The interaction between the particles can lead to strong correlations which in principle may distribute over an exponentially large Hilbert space. Yet, realistic physical systems exhibit some additional structure. To name possibly the most important one, the particles interact only by \emph{two}-body forces and the respective ground state problem can therefore be addressed in the reduced \emph{two}-particle picture \cite{ColemanRev,Coleman}. Since most subcommunities restrict to systems all characterized by the same pair interaction (for instance Coulomb interaction in quantum chemistry, contact interaction in quantum optics and Hubbard interaction in solid state physics) the ground state problem should \emph{de facto} involve only the one-particle reduced density matrix. Indeed, for Hamiltonians of the form $H_{\kappa}(h) = h + \kappa V$, where $h$ represents the one-particle terms and $V$ the \emph{fixed} pair interaction with coupling strength $\kappa$, the conjugate variable to $H_{\kappa}(h)$ and $h$, respectively, is the one-particle reduced density operator $\rho_1$.  The corresponding exact one-particle theory is known as Reduced Density Matrix Functional Theory (RDMFT) and is based on the existence of an exact energy functional $\mathcal{E}_{\kappa}(\rho_1) \equiv \mbox{Tr}[h \rho_1]+ \kappa \mathcal{F}(\rho_1)$ \cite{Gilbert} (see also \cite{PG16}). Here, the interaction functional $\mathcal{F}(\rho_1)$ is universal in the sense that it depends only on the fixed interaction $V$ but not on the coupling $\kappa$ or the one-particle terms $h$.

There is also another less profound motivation for the description of quantum many-body systems in the one-particle picture, as governed by the one particle reduced density operator. Whenever, the coupling $\kappa$ between the identical fermions vanishes the respective Hamiltonian $H_{\kappa=0}(h)$ contains only one-particle terms and the ground state problem can be entirely discussed and solved in the much simpler one-particle picture: In a first step one needs to diagonalize the one-particle Hamiltonian $h$ on the one-particle Hilbert space $\Ho$,
$h|_{\Ho}= \sum_{j \geq 1} \varepsilon_j \ket{\chi_j} \!\bra{\chi_j}$. Then, in a second step, the $N$ energetically lowest one-particle eigenstates $\ket{\chi_1},\ldots,\ket{\chi_N}$ are occupied successively from below just obeying Pauli's exclusion principle. The respective $N$-fermion ground state follows immediately as
$\ket{\chi_1,\chi_2,\ldots,\chi_N} \equiv f_{\chi_1}^\dagger \ldots f_{\chi_N}^\dagger \ket{vac}$, emphasizing clearly why \emph{configuration states} $\ket{\chi_1,\chi_2,\ldots,\chi_N}$ are considered as being uncorrelated.
But how about interacting systems? By turning on the coupling $\kappa$, the occupation numbers $n_{\chi_i}$ of the individual one-particle states $\ket{\chi_i}$ begin to deviate from the extremal values one and zero, respectively. In other words, the corresponding $N$-fermion ground state $\ket{\Psi(\kappa)} \in \HN$ is not uncorrelated anymore and instead follows in general as a superposition involving various $N$-fermion configurations $(i_1 \ldots,i_N)$
\begin{equation}\label{PsiCI}
\ket{\Psi(\kappa)}  = \sum_{i_1 < \ldots <i_N } c_{i_1,\ldots,i_N}(\kappa)\,\ket{\chi_{i_1},\ldots,\chi_{i_N}}\,.
\end{equation}
This superposition could involve in principle all ${d\choose N}$ configurations of $N$ fermions distributed over $d\equiv\mbox{dim}(\Ho)$ many orbitals $\ket{\chi_j}$. Yet, for realistic systems of confined fermions (e.g., electrons in atoms) the one-particle Hamiltonian $h$ often dominates the interaction Hamiltonian $\kappa V$ and energetically lower or higher lying orbitals $\ket{\chi_j}$ far away from the Fermi level are either almost occupied ($n_{\chi_j}\approx 1$) or almost unoccupied ($n_{\chi_j}\approx 0$). This emphasizes the  significance of the concept of active spaces. To be more specific, it allows one to exploit significantly simplified ansatzes for $\ket{\Psi}$ involving only configurations $i_1 < \ldots <i_N $ with a certain number of fully frozen (core) orbitals and some inactive virtual orbitals.  In quantum chemistry such ground state ansatzes are referred to as \emph{Complete Active Space Self-Consistent Field} (CASSCF) ansatz (see, e.g., Refs.~\cite{CASSCF1,CASSCF2,CASSCF3,CASSCF4}).

The general aim of our paper is to illustrate and prove in a mathematically rigorous way that also the saturation of the \emph{generalized} Pauli constraints (\emph{pinning}) \cite{BD,Kly4,Kly3,Altun} gives rise to specific, generalized active spaces. In that sense, our work shall provide the foundation for possible future applications of the new concept of generalized Pauli constraints within quantum chemistry and physics, particularly in the form of more systematic \emph{Multiconfiguration Self-Consistent Field} (MCSCF) ansatzes. The paper therefore consists of two complementary parts. Part I explains and comprehensively illustrates various results in the context of fermionic quantum systems and avoids any technicalities. Quite in contrast, Part II provides rigorous derivations of our results and extends them to non-fermionic systems.

The present Part I is structured as follows. After fixing the notation and introducing the basic concepts in Section \ref{sec:notation}, we illustrate in Section \ref{sec:PC} the connection of pinning of Pauli constraints and structural simplifications of the $N$-fermion quantum state. This link between the one-particle and $N$-particle picture provides in particular a solid foundation for the concept of (complete) active spaces. In Section \ref{sec:GPC}, we explain and illustrate how this concept of active spaces could be generalized. To be more specific, we present and illustrate our main results stating that the saturation of the generalized Pauli constraints implies a selection rule identifying the $N$-fermion configurations contributing in a respective natural orbital expansion.  A converse form of this selection rule establishes the basis for corresponding  multiconfigurational wavefunction ansatzes.

\section{Notation and concepts}\label{sec:notation}
In the following, we fix the notation and introduce some basic concepts. To keep our work self-contained we in particular recall some concepts which were already introduced and discussed in \cite{CSQuasipinning,CSQ}.
In our work, we always consider a finite $d$-dimensional one-particle Hilbert space $\Ho$. In the context of numerical approaches in physics and quantum chemistry, such $\Ho$ typically arises from the truncation of the full infinite-dimensional one-particle Hilbert space of square integrable wave functions $ L^2(\mathcal{C})\otimes \CC^{2s+1}$ by choosing a finite basis set of $d$ spin-orbitals. A prime example would be electrons in an atom, i.e., spin $s=\frac{1}{2}$ with the underlying configuration space $\mathcal{C}$ given by $\mathcal{C}\equiv \RR^3$ and a basis set of $d$ atomic spin-orbitals.

\subsection{Natural orbitals and natural occupation numbers}\label{subsec:1RDM}
The crucial object of our work is the one-particle reduced density operator $\rho_1$ of an $N$-fermion quantum state $\ket{\Psi}\in \HN$. There are two equivalent routes that one could follow for introducing $\rho_1$. By exploiting first quantization, one naturally embeds the $N$-fermion Hilbert space $\HN \leq \Ho^{\otimes^N}$ into the Hilbert space $ \Ho^{\otimes^N}$ of $N$ distinguishable particles. Tracing out $N-1$ of those tensor product factors $\Ho$ yields
\begin{equation}\label{1RDO}
\rho_1 \equiv N\,\mbox{Tr}_{N-1}[|\Psi\rangle\!\langle\Psi|]\,.
\end{equation}
The partial trace in Eq.~(\ref{1RDO}) is indeed well-defined since the choice of the $N-1$ factors to be traced out does not matter due to the well-defined exchange-symmetry of $\ket{\Psi}$. An alternative but equivalent approach to define $\rho_1$ is based on second quantization. After fixing some orthonormal reference basis $\{\ket{\chi_j}\}_{j=1}^d$ for the one-particle Hilbert space $\Ho$ and introducing the respective creation and annihilation operators, $\rho_1$ follows from its matrix representation
\begin{equation}\label{1RDM}
  \bra{\chi_i} \rho_1 \ket{\chi_j} \equiv \bra{\Psi}f_{\chi_j}^\dagger f_{\chi_i} \ket{\Psi}\,.
\end{equation}
Diagonalizing the Hermitian one-particle reduced density operator $\rho_1$,
\begin{equation}\label{1RDOdiag}
\rho_1 = \sum_{j=1}^d\,n_j\,|j\rangle\! \langle j|\,,
\end{equation}
gives rise to the \emph{natural occupation numbers} (NONs) $n_j$ and the \emph{natural orbitals} $|j\rangle$, the corresponding eigenstates \cite{Loewdin55,DavidsonRev}. This terminology also motivates the normalization $\mbox{Tr}_1[\rho_1]=n_1+\ldots+n_d =N$ which allows us to interpret the eigenvalues of $\rho_1$ as occupation numbers, the occupancies of the natural orbitals.
Moreover, for the following considerations we order the NONs decreasingly, $n_1\geq n_2\geq\ldots\geq n_d\geq 0$.

The natural orbitals of any $N$-fermion state $\ket{\Psi}$ form an orthonormal basis $\mathcal{B}_1=\{|j\rangle\}_{j=1}^d$ for the one-particle Hilbert space $\Ho$. This basis is unique (up to phases) as long as the NONs are non-degenerate.
Based on the natural orbital basis $\mathcal{B}_1$, we introduce a natural orbital induced operator which will play a crucial role for the compact formulating of our main results:
\begin{definition}[Natural orbital induced operators]\label{def:Lhat}
Given $\ket{\Psi}\in \HN$ and let $\mathcal{B}_1$ be a basis of natural orbitals. For any polynomial $L$ of $d$ variables of degree one, we define
\begin{equation}
\hat{L}_{\mathcal{B}_1} \equiv L(\hat{n}_1,\ldots,\hat{n}_d)\,,
\end{equation}
where the particle number operators $\hat{n}_j \equiv f_j^\dagger f_j$ refer to the natural orbitals $\mathcal{B}_1$.
\end{definition}
\noindent Since we use this concept of an orbital induced operator only with respect to the natural orbitals of a given quantum state $\ket{\Psi}$ we refrained from extending the definition of $\hat{L}_{\mathcal{B}_1}$ to arbitrary  orthonormal bases $\mathcal{B}_1$. We would also like to stress again that here and in the following, the natural orbitals $\ket{j}$ of $\ket{\Psi}$ are only unique as long as the NONs are non-degenerate and their labelling resembles that of the corresponding NONs, i.e.~$n_1\geq n_2 \geq \ldots \geq n_d$.

Of course, we could have easily extended the Definition \ref{def:Lhat} to all analytic functions of $d$ variables. Yet, only for linear forms $L$ the following important identity holds (due to $n_j = \bra{\Psi}\hat{n}_j\ket{\Psi}$)
\begin{equation}
\bra{\Psi} \hat{L}_{\mathcal{B}_1}\ket{\Psi}\equiv \bra{\Psi} L(\hat{n}_1,\ldots,\hat{n}_d)\ket{\Psi}  = L(n_1,\ldots,n_d)\,.
\end{equation}

\subsection{Natural orbital expansion}\label{subsec:Psi}
In general, any orthonormal basis $\mathcal{B}_1\equiv \{\ket{i}\}_{i=1}^d$ for $\Ho$ induces an orthonormal basis $\mathcal{B}_N$ for $\HN$, given by the family of configuration states
\begin{equation}
\ket{\bd{i}} \equiv \ket{i_1,i_2,\ldots, i_N} \equiv f_{i_1}^\dagger f_{i_2}^\dagger  \ldots f_{i_N}^\dagger \ket{vac}\,,
\end{equation}
where $1\leq i_1<i_2< \ldots <i_N\leq d$, $\bd{i}\equiv (i_1,\ldots,i_N)$ and $\ket{vac}$ denotes the vacuum state of the Fock space constructed over $\H_1$.  For ease of notation we suppress here and in the following the explicit dependence of the configuration states $\ket{\bd{i}} \in \mathcal{B}_N$ on $|\Psi\rangle$ and the choice $\mathcal{B}_1$ of natural orbitals in case the NONs are degenerate.
Since $\mathcal{B}_N$ is a basis for $\HN$ we can expand every quantum state in $\HN$ uniquely with respect to $\mathcal{B}_N$, in particular also $|\Psi\rangle$ itself (whose natural orbitals gave rise to $\mathcal{B}_1$ and thus $\mathcal{B}_N$) \cite{Loewdin56,DavidsonRev}
\begin{equation}\label{PsiNO}
\ket{\Psi} = \sum_{\bd{i}}\, c_{\bd{i}}\,\ket{\bd{i}}\,.
\end{equation}
Notice that this expansion based on natural orbitals imposes quite strong restrictions on the expansion coefficients $c_{\bd{i}}$.
These self-consistency conditions namely reflect the fact that the corresponding one-particle reduced density operator $\rho_1$ (\ref{1RDO}) is diagonal with respect to its own natural orbitals $\ket{j}$. In addition, the occupancy of $\ket{j}$
is given by $n_j$, the $j$-th largest NON,
\begin{equation}\label{NONc}
n_j = \sum_{\bd{i}\ni j} |c_{\bd{i}}|^2\,.
\end{equation}
Note, that in a natural expansion (\ref{PsiNO}) some of the coefficients may be zero. We will often distinguish the set of configuration states which do not contribute to the expansion of $\kpsi$ and call it the {\it natural support}, $\mathrm{Supp}_{\mathcal{B}_1}(\kpsi)$, of $\kpsi$
\begin{equation}\label{def:support}
\mathrm{Supp}_{\mathcal{B}_1}(\kpsi):=\{\bd{i}:\ \ket{\bd i}\in\mathcal{B}_N{\rm\ and}\ \braket{\Psi|\bd i}\neq0\}.
\end{equation}
Clearly, in case of degenerate NONs the support of $\kpsi$ may depend on the specific choice $\mathcal{B}_1$ of natural orbitals.

\subsection{Geometric picture of occupation numbers}\label{subsec:geom}
Equation~\eqref{NONc} allows us to interpret the self-consistent expansion \eqref{PsiNO} geometrically. By denoting for each configuration state $\ket{\bd{i}}$ the respective vector of \emph{unordered} occupation numbers by $\bd{n}_{\bd{i}}$,
\begin{equation}\label{NONconf}
  \bd{n}_{\bd{i}}\equiv \mbox{spec}\big(N \mbox{Tr}_{N-1}[\ket{\bd{i}}\!\bra{\bd{i}}]\big)\,,\quad \mbox{i.e.},\,\left(\bd{n}_{\bd{i}}\right)_j=
 \Bigg\{ \begin{array}{@{\kern2.5pt}lL}
    \hfill 1 & if $j \in \bd{i}$\\
          0 & if $j \not \in \bd{i}$
\end{array}\,,
\end{equation}
Eq.~\eqref{NONc} implies
\begin{equation}\label{NONgeom}
\bd{n} = \sum_{\bd{i}}\, |c_{\bd{i}}|^2\, \bd{n}_{\bd{i}}\,.
\end{equation}
This means that the vector $\bd{n}$ of NONs follows as the ``center of mass'' for masses $|c_{\bd{i}}|^2$ located at positions $\bd{n}_{\bd{i}}$ in $\RR^d$. Since each $\bd{n}_{\bd{i}}$ contains $N$ ones and $d-N$ zeros, the vectors $\bd{n}_{\bd{i}}$ are vertices of
the Pauli hypercube $[0,1]^d$, namely exactly those with normalization $\|\bd{n}_{\bd{i}}\|_1=N$. All the other vertices of the Pauli hypercube $[0,1]^d$ would correspond to configuration states of particle numbers different than $N$ and therefore will not play any role in the present work which restricts to fixed particle number $N$. This geometric picture is illustrated in Figure \ref{fig:polytope} in Section \ref{subsec:pin} for the Borland-Dennis setting, i.e., for the case of three fermions and a six-dimensional one-particle Hilbert space.

Lastly, we point out a geometric aspect concerning the action of operators $\hat L_{\mathcal{B}_1}$ from Definition \ref{def:Lhat} on configuration states $\ket{\bd i}$. Namely, for a given $L=\sum_{j=1}^d l_j n_j$ it is straightforward to check that $\hat L_{\mathcal{B}_1}$ is diagonal in the NO-basis and that its diagonal entries follow by the  geometric formula
\begin{equation}\label{eq:Lhat-action}
\hat L_{\mathcal{B}_1}\ket{\bd i}=\bd{l}\cdotp\bd{n_{\bd i}}\,\ket{\bd i}.
\end{equation}
Here, we use the standard notation for the dot-product of vectors, i.e.  $\bd{l}\cdotp\bd{n_{\bd i}}:=\sum_{j=1}^d l_j (\bd{n_{\bd i}})_j$.

\section{Pauli constraints and concept of active spaces}\label{sec:PC}
The properties and the behavior of fermionic quantum systems strongly rely on Pauli's exclusion principle \cite{Pauli1925}.
This principle defines a constraint on the one-particle picture as governed by the one-particle reduced density operator $\rho_1$. For any $N$-fermion state $\ket{\Psi}$ the occupancies of one-particle states $\ket{\varphi}$  are restricted, $0\leq \bra{\Psi} \hat{n}_{\varphi}\ket{\Psi}\leq 1$. Indeed, since $\bra{\Psi} \hat{n}_{\varphi}\ket{\Psi} = \bra{\varphi} \rho_1\ket{\varphi}$ this constrains $\rho_1$ according to
\begin{equation}\label{PCrho}
\mathbb{0}\leq \rho_1\leq \mathbb{1}  \,.
\end{equation}
Equivalent to this operator relation, the NONs $n_i$ (eigenvalues of $\rho_1$) are restricted,
\begin{equation}\label{PC}
0 \leq n_i \leq 1\,.
\end{equation}
These Pauli constraints play an important role for various physical phenomena with remarkable consequences for both, the microscopic and the macroscopic world. On a microscopic length scale, they are the basis of the Aufbau principle for atoms and nuclei. For macroscopic systems the Pauli exclusion principle is responsible for the very stability of matter \cite{Dyson67,LiebStab}. This universal relevance of Pauli's exclusion principle is quite obvious for weakly interacting systems: All Pauli constraints are (approximately) saturated, i.e.,~one observes for each NON either $n_i \approx 1$ or $n_i \approx 0$. Such (approximate) \emph{pinning} of all Pauli constraints is the typical behavior within mean field theories such as the Landau-Fermi theory or the Hartree-Fock theory. Even for strongly correlated systems one often observes this quasipinning by Pauli constraints since at least the largest occupation numbers are very close to one and the smallest ones are very close to zero. For instance, the 1s shell in atoms (under realistic conditions) is typically fully occupied and the normalization $\sum_{i=1}^d n_i=N$ requires the smallest NONs to be arbitrarily small for large or even infinite basis set size $d\gg N$.

In the following, we would like to formalize the concept of active spaces by relating their structure in the $N$-particle picture to the possible saturation of multiple Pauli constraints concerning the one-particle picture. For this, we express the family of Pauli constraints \eqref{PC} in a  more compact form. For any pair $(r,s)$ of integers $0 \leq r\leq N$, $0 \leq s\leq d-N$ we define the constraints (see also \cite{CSQ})
\begin{equation}\label{PCrs}
S^{(r,s)}(\bd{n})\equiv\sum_{i=1}^r(1-n_i)+\sum_{j=d+1-s}^d n_j \,\geq\, 0\,
\end{equation}
on the non-increasingly ordered NONs $\bd{n}$. The family of those constraints is equivalent to the Pauli constraints in their original form \eqref{PC}. From the geometric point of view (recall Section \ref{subsec:geom}), all vectors $\bd{n}$ of non-increasingly ordered NONs obeying the Pauli exclusion principle form a specific polytope in $\RR^d$, the Pauli simplex $\Sigma$,
\begin{equation}\label{PCsimplex}
\Sigma \equiv \left\{\bd{n}\in \RR^d\,|\, 1\geq n_1\geq n_2 \geq \ldots \geq n_d \geq 0\, \wedge \, \|\bd{n}\|_1=N \right\}\,.
\end{equation}

\begin{theorem}[Active space]\label{thm:CAS}
Let $\ket{\Psi} \in \HN$ with $\mbox{dim}(\Ho)=d$, recall Definition \ref{def:Lhat} and let $\mathcal{B}_1$ be a basis of natural orbital of $\ket{\Psi}$. For all integers $r\leq N$, $s\leq d-N$ one then has
\begin{equation}\label{CAS0}
S^{(r,s)}(\bd{n}) = 0 \quad \Leftrightarrow \quad \hat{S}_{\mathcal{B}_1}^{(r,s)}\ket{\Psi} = 0\,.
\end{equation}
This implies a selection rule on the expansion coefficients in the sense that only those configuration states  $\ket{\bd{i}}\equiv\ket{i_1,\ldots,i_N}$ may  contribute to the self-consistent expansion of $\ket{\Psi}$ (recall \eqref{PsiNO}) which include all natural orbitals $\ket{1},\ldots,\ket{r}$ and exclude $\ket{d-N+1},\ldots,\ket{d}$. To be more precise, this means
\begin{equation}\label{SRrs}
S^{(r,s)}(\bd{n}_{\bd{i}})\neq 0\quad \Rightarrow \quad c_{\bd{i}} = 0\,,
\end{equation}
where $\bd{n}_{\bd{i}}$ is the unordered spectrum of the configuration state $\ket{\bd{i}}$ as introduced in Eq.~\eqref{NONconf}.
\end{theorem}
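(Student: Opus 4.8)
The plan is to deduce both halves of the theorem at once from a single observation: the natural-orbital induced operator
$\hat S^{(r,s)}_{\mathcal{B}_1}=\sum_{i=1}^r(\Id-\hat n_i)+\sum_{j=d+1-s}^d\hat n_j$
is \emph{simultaneously} positive semidefinite and diagonal in the configuration basis $\mathcal{B}_N$ induced by $\mathcal{B}_1$. Diagonality is immediate because each $\hat n_j$ is diagonal in $\mathcal{B}_N$; since $\hat S^{(r,s)}_{\mathcal{B}_1}$ differs from the operator associated with the linear part of $S^{(r,s)}$ only by the scalar $r\,\Id$, the action formula \eqref{eq:Lhat-action} gives
$\hat S^{(r,s)}_{\mathcal{B}_1}\ket{\bd i}=S^{(r,s)}(\bd n_{\bd i})\,\ket{\bd i}$
for every configuration $\bd i$. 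Positivity is then clear: each eigenvalue $S^{(r,s)}(\bd n_{\bd i})=\sum_{i=1}^r\big(1-(\bd n_{\bd i})_i\big)+\sum_{j=d+1-s}^d(\bd n_{\bd i})_j$ is a sum of entries of $\bd n_{\bd i}\in\{0,1\}^d$, hence $\geq 0$; equivalently, $\Id-\hat n_i$ and $\hat n_j$ are orthogonal projectors and a sum of projectors is positive semidefinite.

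Given this, I would expand $\ket\Psi=\sum_{\bd i}c_{\bd i}\ket{\bd i}$ in $\mathcal{B}_N$ and combine orthonormality of the configuration states with the diagonal action above to obtain
\[
\bra\Psi\hat S^{(r,s)}_{\mathcal{B}_1}\ket\Psi=\sum_{\bd i}|c_{\bd i}|^2\,S^{(r,s)}(\bd n_{\bd i})=S^{(r,s)}(\bd n),
\]
where the last equality is the identity $\bra\Psi\hat L_{\mathcal{B}_1}\ket\Psi=L(\bd n)$ for affine forms $L$ (equivalently, \eqref{NONgeom} read componentwise, using $\sum_{\bd i}|c_{\bd i}|^2=1$). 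The middle expression is a sum of \emph{nonnegative} terms, so it vanishes if and only if $|c_{\bd i}|^2\,S^{(r,s)}(\bd n_{\bd i})=0$ for every $\bd i$; by linear independence of the $\ket{\bd i}$, this last condition is in turn equivalent to $\hat S^{(r,s)}_{\mathcal{B}_1}\ket\Psi=\sum_{\bd i}c_{\bd i}\,S^{(r,s)}(\bd n_{\bd i})\,\ket{\bd i}=0$. This chain proves the equivalence \eqref{CAS0} and, read at the level of individual terms, the selection rule \eqref{SRrs}: if $S^{(r,s)}(\bd n_{\bd i})\neq 0$ then $|c_{\bd i}|^2\,S^{(r,s)}(\bd n_{\bd i})=0$ forces $c_{\bd i}=0$.

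It remains only to translate $S^{(r,s)}(\bd n_{\bd i})=0$ into the combinatorial statement of the theorem: since every entry of $\bd n_{\bd i}$ equals $0$ or $1$, $\sum_{i=1}^r\big(1-(\bd n_{\bd i})_i\big)=0$ holds exactly when $\{1,\ldots,r\}\subseteq\bd i$, and $\sum_{j=d+1-s}^d(\bd n_{\bd i})_j=0$ holds exactly when $\bd i\cap\{d+1-s,\ldots,d\}=\emptyset$, that is, precisely when the configuration keeps $\ket{1},\ldots,\ket{r}$ occupied and leaves $\ket{d+1-s},\ldots,\ket{d}$ empty. I do not anticipate a real obstacle: once $\hat S^{(r,s)}_{\mathcal{B}_1}$ is recognized as a positive semidefinite operator that is diagonal in $\mathcal{B}_N$, everything else is bookkeeping. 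The only points worth stating carefully are that the additive constant in $S^{(r,s)}$ is still within the scope of Definition \ref{def:Lhat} (degree-one polynomials include affine ones), and that degeneracy of the NONs causes no complication here because the theorem fixes one admissible basis $\mathcal{B}_1$ of natural orbitals, with respect to which $\rho_1$ is diagonal and $\bra\Psi\hat n_j\ket\Psi=n_j$ holds by construction.
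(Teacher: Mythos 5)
Your proposal is correct and follows essentially the same route as the paper's proof: recognize that $\hat{S}^{(r,s)}_{\mathcal{B}_1}$ is diagonal in the configuration basis with nonnegative integer eigenvalues $S^{(r,s)}(\bd{n}_{\bd{i}})$, so that $\bra{\Psi}\hat{S}^{(r,s)}_{\mathcal{B}_1}\ket{\Psi}=S^{(r,s)}(\bd{n})=0$ forces all weight of $\ket{\Psi}$ into the zero eigenspace, which is exactly the selection rule. Your write-up merely makes explicit the bookkeeping (the sum-of-nonnegative-terms argument and the combinatorial translation) that the paper leaves implicit.
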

\begin{proof}
Since $S^{(r,s)}(\bd{n})= \bra{\Psi}\hat{S}_{\mathcal{B}_1}^{(r,s)}\ket{\Psi}$ the direction ``$\Leftarrow$'' in Eq.~\eqref{CAS0} follows immediately. To prove ``$\Rightarrow$'', we observe that the configuration states $\ket{\bd{i}}$ are the eigenstates of the operator $\hat{S}_{\mathcal{B}_1}^{(r,s)}$ with respective integer eigenvalues $S^{(r,s)}(\bd{n}_{\bd{i}})$. Since the smallest eigenvalue is zero, $S^{(r,s)}(\bd{n}) =0$ implies that the whole weight of $\ket{\Psi}$ needs to lie in the zero eigenspace. The Selection Rule \eqref{SRrs} follows then immediately by plugging in the expansion \eqref{PsiNO} into \eqref{CAS0} and using again the fact that $\hat{S}_{\mathcal{B}_1}^{(r,s)}$ is diagonal with respect to the configuration states $\ket{\bd{i}}$.
\end{proof}

The proof of Theorem \ref{thm:CAS} and the derivation of the consequences of pinning by the Pauli constraints, respectively, was rather elementary. This is due to the fact that the natural orbital induced operator $\hat{S}_{\mathcal{B}_1}^{(r,s)}$ has no negative eigenvalues, i.e.~it is positive semi-definite.
Therefore, whenever $S^{(r,s)}(\bd{n})=\bra{\Psi}\hat{S}_{\mathcal{B}_1}^{(r,s)}\ket{\Psi}=0$, $\ket{\Psi}$ cannot have any weight in eigenspaces with positive eigenvalues since their contributions to $\bra{\Psi}\hat{S}_{\mathcal{B}_1}^{(r,s)}\ket{\Psi}$ could not be cancelled out by contributions from eigenspaces with negative eigenvalues. This will be different when we discuss in the following the consequences of pinning of generalized Pauli constraints, $D(\bd{n})= 0$, since their respective natural orbital induced operators $\hat{D}_{\mathcal{B}_1}$ have both negative and positive eigenvalues.

\section{Generalized Pauli constraints and generalized active spaces}\label{sec:GPC}

\subsection{Generalized Pauli constraints}\label{subsec:GPC}
Despite the remarkable significance of Pauli's exclusion principle \eqref{PC}, \eqref{PCrs} on all physical length scales, it has conclusively been shown only recently \cite{Kly2,Kly3,Altun} that the fermionic exchange symmetry implies even greater restrictions on the one-particle picture. To be more specific, as illustrated in Fig.~\ref{fig:fqmp}, the set of pure $N$-representable vectors $\bd{n}$ of (non-increasingly ordered) NONs form a polytope, a proper subset of the Pauli simplex $\Sigma$ \eqref{PCsimplex}.
\begin{figure}[h]
\centering
\includegraphics[scale=0.8]{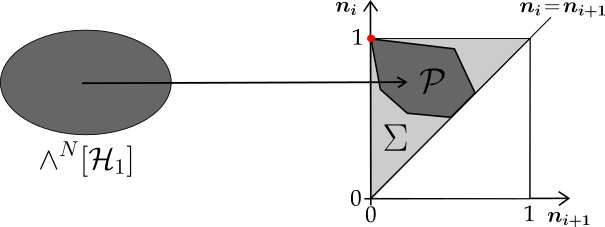}
\caption{Schematic illustration of the map assigning to each $N$-fermion quantum state $\ket{\Psi}\in\HN$ its vector $\bd{n}\in \RR^d$ of non-increasingly ordered NONs. The set of all attainable $\bd{n}$, as described by the generalized Pauli constraints \eqref{GPC}, forms a polytope $\mathcal{P}$, a proper subset of the Pauli simplex $\Sigma$ \eqref{PCsimplex} (shown in light-gray).  The ``Hartree-Fock point'', corresponding to ground states of non-interacting fermions, is shown as a red dot.}
\label{fig:fqmp}
\end{figure}
For each setting of $N$ fermions and a $d$-dimensional one-particle Hilbert space $\Ho$, this polytope $\mathcal{P}$ is described by a finite family of linear inequalities, the \emph{generalized Pauli constraints} (GPC),
\begin{equation}\label{GPC}
  D_i(\bd{n}) \equiv \kappa_i^{(0)} +  \bd{\kappa}_i \cdot \bd{n} \equiv \kappa_i^{(0)} + \sum_{j=1}^d \kappa_i^{(j)} n_j\geq 0\,,
\end{equation}
$i=1,2,\ldots, r_{N,d}<\infty$. For each GPC $D_i \geq 0$, the respective coefficients $\kappa_i^{(j)}$ can be chosen as integers. In particular, by referring to the canonical choice of \emph{minimal} integers, the $l^1$-distance of $\bd{n}$ to the hyperplane defined by $D_i\equiv0$ follows as $D_j(\bd{n})$ up to a prefactor (for more details see Ref.~\cite{CS2016a}).

While the GPCs for the smaller settings with $N,d \leq 7$ have already been derived several decades ago by some brute force approach  \cite{Smith,BD,Rus2}, it was Klyachko's breakthrough \cite{Kly2,Kly3} on how to find a systematic procedure which allows one to determine for all settings $(N,d)$, at least in principle, the family of GPCs. Yet, it is still an ongoing challenging to develop more efficient algorithms for determining the GPCs and  in particular to approximate them (see, e.g., Ref.~\cite{TomekDoub}).
Before we briefly discuss the potential physical relevance of the GPCs, we would like to present them for the first non-trivial  setting, $(N,d)=(3,6)$, and comment on their triviality for the smallest few settings.

First, due to the particle hole duality on the fermionic Fock space we can restrict ourselves without loss of generality to $N \leq d/2$. Indeed, one has (see, e.g., \cite{Kly2})
\begin{lemma}[Particle-hole duality]\label{lem:PHduality}
The generalized Pauli constraints of the setting $(d-N,d)$ of $d-N$ fermions and a $d$-dimensional one-particle Hilbert space follows from those of $(N,d)$ by just replacing $n_i \mapsto 1- n_{d-i+1}$ for all $i$.
\end{lemma}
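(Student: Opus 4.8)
The plan is to lift the particle--hole duality to the level of quantum states, work out how it transforms one-particle reduced density operators, and then transfer the resulting affine relation to the two polytopes of natural occupation numbers.

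First I would fix an orthonormal basis $\mathcal{B}=\{\ket{\chi_j}\}_{j=1}^d$ of $\Ho$ and invoke the standard \emph{particle--hole conjugation} $\mathcal{C}$: the unitary on the fermionic Fock space over $\Ho$ that maps the $N$-particle sector $\HN$ onto the $(d-N)$-particle sector $\wedge^{d-N}[\Ho]$ and satisfies $\mathcal{C}\,f_{\chi_j}^\dagger\,\mathcal{C}^{-1}=f_{\chi_j}$ (equivalently $\mathcal{C}\,f_{\chi_j}\,\mathcal{C}^{-1}=f_{\chi_j}^\dagger$) for all $j$; on configuration states it acts by complementation of the occupied set, up to a sign fixed by the anticommutation relations, so it restricts to a bijection $\HN\to\wedge^{d-N}[\Ho]$. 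Using the matrix definition \eqref{1RDM} together with $\mathcal{C}^{-1}f_{\chi_j}^\dagger\mathcal{C}=f_{\chi_j}$, $\mathcal{C}^{-1}f_{\chi_i}\mathcal{C}=f_{\chi_i}^\dagger$ and $f_{\chi_j}f_{\chi_i}^\dagger=\delta_{ij}-f_{\chi_i}^\dagger f_{\chi_j}$, a one-line computation yields, for the one-particle reduced density operator $\rho_1'$ of $\mathcal{C}\kpsi$,
\begin{equation}
\bra{\chi_i}\rho_1'\ket{\chi_j}
=\bra{\Psi}f_{\chi_j}f_{\chi_i}^\dagger\ket{\Psi}
=\delta_{ij}-\bra{\chi_j}\rho_1\ket{\chi_i}\,,
\end{equation}
that is, $\rho_1'=\Id-\rho_1^{T}$ in the basis $\mathcal{B}$; in particular its eigenvalues are $1-n_1,\ldots,1-n_d$. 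Hence, if the non-increasingly ordered NONs of $\kpsi$ are $n_1\geq\ldots\geq n_d$, the non-increasingly ordered NONs of $\mathcal{C}\kpsi$ are $1-n_d\geq 1-n_{d-1}\geq\ldots\geq 1-n_1$. Writing $T\colon\RR^d\to\RR^d$ for the affine involution $(T\bd{n})_i:=1-n_{d-i+1}$, this says exactly that the ordered NON-vector of $\mathcal{C}\kpsi$ is $T(\bd{n})$.

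Since $\mathcal{C}$ is a bijection between $N$-fermion and $(d-N)$-fermion states, $T$ maps the polytope $\mathcal{P}_{N,d}$ of pure $N$-representable ordered NON-vectors onto $\mathcal{P}_{d-N,d}$, and (as $T$ is an involution) vice versa. Being an invertible affine map of $\RR^d$, $T$ induces an isomorphism of the face lattices of $\mathcal{P}_{N,d}$ and $\mathcal{P}_{d-N,d}$; in particular it carries facets to facets, and because it has integer coefficients it preserves the canonical minimal-integer normalisation of a facet-defining inequality (the gcd of the coefficient tuple is unchanged by the substitution). Consequently $D\geq 0$ is a generalized Pauli constraint of $(d-N,d)$ if and only if $D\circ T\geq 0$ is one of $(N,d)$; and $D\circ T$ is obtained from $D$ precisely by the replacement $n_i\mapsto 1-n_{d-i+1}$, which proves the claim.

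The main obstacle is the bookkeeping needed to obtain the 1-RDM transformation law cleanly: verifying that $\mathcal{C}$ really is a \emph{unitary} (the sign conventions in the particle--hole transformation) and, above all, keeping track of the reordering of the complemented spectrum $\{1-n_j\}$ into non-increasing order --- this is exactly why the substitution reads $n_i\mapsto 1-n_{d-i+1}$ rather than $n_i\mapsto 1-n_i$. Once the state-level statement ``ordered NON-vector of $\mathcal{C}\kpsi$ equals $T$ applied to that of $\kpsi$'' is in place, the passage to the GPCs is routine, relying only on the elementary fact that an affine isomorphism of $\RR^d$ sends facet-defining inequalities of a polytope to facet-defining inequalities of its image.
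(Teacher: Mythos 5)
Your argument is correct. Note that the paper itself offers no proof of this lemma --- it is quoted as a known fact with a reference to Klyachko --- and the route you take (implementing particle--hole conjugation as a unitary $\mathcal{C}:\HN\to\wedge^{d-N}[\Ho]$ with $\mathcal{C}f_{\chi_j}^\dagger\mathcal{C}^{-1}=f_{\chi_j}$, deducing $\rho_1'=\Id-\rho_1^{T}$ and hence the spectrum $\{1-n_j\}$, reordering to get the affine involution $T$, and transporting the facet structure of $\mathcal{P}_{N,d}$ to $\mathcal{P}_{d-N,d}$) is exactly the standard derivation, with all the delicate points (unitarity of $\mathcal{C}$, the reindexing $i\mapsto d-i+1$ forced by re-sorting, preservation of the integer normalisation under a signed permutation of coefficients) correctly identified and handled.
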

\noindent Second, as summarized by Example \ref{ex:trivial}, the GPCs for all settings with only one or two fermions (and according to the particle-hole duality, Lemma \ref{lem:PHduality}, also those with one or two holes) are trivial \cite{BD2}. The first non-trivial setting is thus the
Borland-Dennis setting, i.e.~$(N,d)=(3,6)$.
\begin{example}[Trivial settings]\label{ex:trivial}
The GPCs for $N=1$ are given by $n_1=1$ and $n_i=0$ for all $i \geq 2$ (i.e., the polytope of mathematically possible $\bd{n}$ contains only one point). For the case of $N=2$ fermions, the GPCs are given by $n_{2k-1}=n_{2k}$ for all $k \leq \lfloor d/2 \rfloor$ and in case $d$ is odd one has additionally $n_d=0$.
\end{example}
\begin{example}[Borland-Dennis setting]\label{ex:BD}
The GPCs for the setting $(N,d)=(3,6)$ read \cite{BD}
\begin{eqnarray}
&&1-(n_1+n_6) = 1-(n_2+n_5) = 1-(n_3+n_4) = 0\,, \label{d=6a} \\
&&D(\bd{n}) \equiv 2-(n_1 + n_2+ n_4) \geq 0 \label{d=6b} \,.
\end{eqnarray}
\end{example}
\noindent We remind the reader that the NONs are always ordered non-increasingly, $n_1 \geq n_2 \geq \ldots \geq n_d \geq 0$.
Notice that the inequality $D(\bd{n})\geq 0$ is more restrictive than Pauli's exclusion principle, which just states implies $2-(n_1 + n_2)\geq 0$. The incidence of GPCs taking the form of equalities (instead of inequalities) as those in (\ref{d=6a}) is rather unique since this
happens only for the Borland-Dennis setting and the settings with at most two fermions or at most two holes.

\subsection{Potential physical relevance of the generalized Pauli constraints}\label{subsec:GPCrel}
In complete analogy to Pauli's exclusion principle, the physical significance of the GPCs is primarily be based on their possible (approximate) saturation in concrete systems.
In an analytical study \cite{CS2013} of the ground state of three harmonically interacting fermions in a one-dimensional harmonic trap it has been shown that the GPCs are not fully saturated. Yet, given this it is quite remarkable that the vector $\bd{n}$ of NONs has just a tiny distance to the polytope boundary given by the eighth power of the coupling strength, $D\propto \kappa^8$. A succeeding comprehensive and conclusive study of harmonic trap systems \cite{Ebler,CSthesis,CS2016a,CS2016b,FTthesis,CSpair} has confirmed that such \emph{quasipinning} represents a genuine physical effect whose origin is the universal conflict between energy minimization and fermionic exchange symmetry in systems of confined fermions \cite{CS2016b}. The presence of such quasipinning (or even pinning if the system's chosen Hilbert space is artificially small) has been verified also in smaller atoms and molecules  \cite{Kly1,BenavLiQuasi,Kly5,Mazz14,MazzOpen,BenavQuasi2,Mazzagain,Alex,CS2015Hubbard,CBthesis,NewMazziotti,MazzOpen2,CSQChem}
). A comment is in order concerning the non-triviality of such (quasi)pinning by the GPCs. Since at least some NONs in most realistic ground states are close to one, the vector $\bd{n}\in \mathcal{P}$ of NONs is typically close to the boundary of the surrounding Pauli simplex $\Sigma$ \eqref{PCsimplex} and consequently (recall $\mathcal{P}\subset \Sigma$ and see Fig.~\ref{fig:fqmp}) it is also close to the boundary of the polytope $\mathcal{P}$. The more crucial question is therefore whether the (quasi)pinning by the GPCs is nontrivial in the sense that it does not already follow from (quasi)pinning by the Pauli constraints, or in other words, whether the GPCs have any significance beyond the Pauli constraints \eqref{PCrs}. This also necessitates a systematic treatment of systems with symmetries, since symmetries are known to favour the occurrence of rather artificial (quasi)pinning \cite{BenavQuasi2,CS2015Hubbard,CBthesis}. A more systematic recent analysis based on the so-called $Q$-parameter \cite{CSQ} has shown that the quasipinning by the GPCs is indeed non-trivial \cite{CSQ,CSpair,CSQChem}.

It has been speculated and suggested that such (quasi)pinning would reduce the complexity of the system's quantum state and would define \emph{``a new physical entity with its own dynamics and kinematics''} \cite{Kly1} (see also \cite{CSQMath12,CSQuasipinning,Stability}).
Based on this expected implication of (quasi)pinning as an effect in the one-particle picture on the structure of the $N$-fermion quantum states, variational ansatzes for ground states have been proposed as part of an ongoing development \cite{CSHF,Stability,MazzSparse1,CBfunct,MazzSparse2}. Moreover, general investigations and deeper insights into the structure of quantum states suggest that taking the GPCs into account may help to turn Reduced Density Matrix Functional Theory (RDMFT) into a more competitive method \cite{TheoRDMFT,PvsE} (for more specific results see Refs.~\cite{TheoTrip,TheoGPC3,CBfunct}). In particular, it has been shown \cite{ExForce} for \emph{all} translationally invariant one-band lattice systems (regardless of their dimensionality, size and interactions) that the gradient of the exact universal functional diverges repulsively on the polytope boundary $\partial \mathcal{P}$. It is exactly this latter result and the suggested implications of (quasi)pinning which motivate us to explore and rigorously derive here the implications of pinning on the  respective $N$-fermion quantum state.

\subsection{Borland-Dennis setting: Implications of pinned occupation numbers}\label{subsec:BD}
We first discuss the implications of pinning within the specific Borland-Dennis setting, i.e.~for $(N,d)=(3,6)$. This in particular also allows us to understand how those implications may look like in the case of \emph{degenerate} NONs.

At first sight, expanding quantum states $\ket{\Psi}$ in the Borland-Dennis setting seems to require ${6 \choose 3}=20$ configurations $\bd{i}\equiv (i_1,i_2,i_3)$. By referring to the self-consistent expansion \eqref{PsiNO}, this reduces to just eight configurations, namely $(1,2,3),(1,2,4),(1,3,5),(1,4,5),$ $(2,3,6),(2,4,6),(3,5,6),(4,5,6)$. This result has been
communicated privately by Ruskai and Kingsley to Borland and Dennis (cf. Ref.~\cite{BD}) and represented an important ingredient for determining the respective GPCs (see also Ref.~\cite{Rus2}). In particular, the three equalities \eqref{d=6a} follow immediately.
In addition, the complex-valued coefficients $c_{\bd{i}}$ need to fulfil additional self-consistency conditions to ensure that the
corresponding one-particle reduced density operator $\rho_1$ \eqref{1RDO} is diagonal with respect to the natural orbitals.

In the following, we use $n_4,n_5,n_6$ as the independent variables in the occupation number pictures and the remaining ones follow form the conditions \eqref{d=6a}. Let us now assume that the NONs are saturating the GPC \eqref{d=6b},
\begin{equation}
0 = D(\bd{n})\equiv -n_4+n_5+n_6\,.
\end{equation}
This implies $c_{356}=c_{456}=0$ (see Theorem 3 in \cite{CSQuasipinning}) and the most general quantum state $\ket{\Psi}$ with pinned NONs therefore takes the form
\begin{eqnarray}\label{PsiBDpin}
\ket{\Psi}&=& c_{123}\ket{1,2,3}+c_{124}\ket{1,2,4}+c_{135}\ket{1,3,5}+c_{145}\ket{1,4,5} \nonumber \\
&&+c_{236}\ket{2,3,6}+c_{246}\ket{2,4,6}\,.
\end{eqnarray}
Using Eq.~\eqref{NONc}, the NONs follow as
\begin{eqnarray}\label{BDdiag}
n_4 &=& |c_{124}|^2+|c_{145}|^2+|c_{246}|^2 \nonumber \\
n_5 &=& |c_{135}|^2+|c_{145}|^2 \nonumber \\
n_6 &=& |c_{236}|^2+|c_{246}|^2 \,
\end{eqnarray}
and the requirement on the off-diagonal entries of $\rho_1$ to vanish read
\begin{eqnarray}\label{BDoff}
  0 &=& \bra{1}\rho_1\ket{6} \,\,\,\,\,= c_{123} c_{236}^\ast +c_{124} c_{246}^\ast  \label{off16}\\
  0 &=& -\bra{2}\rho_1\ket{5}= c_{123} c_{135}^\ast + c_{124} c_{145}^\ast \label{off25}\\
  0 &=& \bra{3}\rho_1\ket{4}\,\,\,\,\,= c_{123} c_{124}^\ast + c_{135} c_{145}^\ast + c_{236} c_{246}^\ast \label{off34}\,.
\end{eqnarray}
All the other off-diagonal entries vanish automatically and thus do not impose any conditions on the expansion coefficients $c_{\bd{i}}$.

\subsubsection{Non-degenerate NONs}
To illustrate the consequences of pinning, we use Theorem 4 from Ref.~\cite{CSQuasipinning} which states for all $\ket{\Psi}$ in the Borland-Dennis setting with $n_3 > n_4$
\begin{equation}\label{oldCS4}
|c_{124}|^2+|c_{135}|^2+|c_{236}|^2  \leq \frac{D(\bd{n})}{n_3-n_4} + 3D(\bd{n})\,.
\end{equation}
Thus, whenever $\ket{\Psi}$ exhibits pinning with $n_3 \neq n_4$, it takes the form
\begin{equation}\label{PsiBDnone}
\ket{\Psi} = c_{123}\ket{123}+c_{145}\ket{145}+c_{246}\ket{246}\,.
\end{equation}
Clearly, this includes the case of non-degenerate NONs, $\frac{1}{2}> n_4 > n_5 > n_6$.


\subsubsection{Degenerate NONs}
In general, understanding the implications of pinning for degenerate NONs turns out to be rather challenging. There are two reason for this: First, there is no unique natural orbital basis anymore and it is therefore not clear whether a selection rule of the form \eqref{PsiBDnone} may refer to \emph{all} possible natural orbital bases or to just one of them. Second, the saturation of some GPC $D \geq 0$ and an additional ordering constraint may automatically enforce the saturation of additional GPCs. In that case, the corresponding selection rule for the saturation $D\equiv 0$ might be more restrictive than in the case of non-degenerate NONs. The latter happens in the Borland-Dennis setting in case of a degeneracy $n_4=n_5$ (and assuming $n_3 \neq n_4$):
The GPC \eqref{d=6b} implies $n_6=0$ and thus \eqref{PsiBDpin} simplifies according to $c_{246}=0$ (recall Eq.~\eqref{NONc}).

The case of an $n_3=n_4$ degeneracy is conceptually different. First of all, result \eqref{oldCS4} does not apply anymore. Moreover, corresponding states $\ket{\Psi}$ could take the specific form \eqref{PsiBDnone} only with respect to highly distinctive bases of natural orbitals. Indeed, for any $\ket{\Psi}$ with $n_3=n_4$ of the form \eqref{PsiBDnone} there are infinitely many allowed orbital rotations in the $n_3=n_4$ subspace (leaving $\rho_1$ invariant) and changing the form $\eqref{PsiBDnone}$ to \eqref{PsiBDpin}, i.e.~leading to a superposition of six rather than three configurations. Yet, the converse turns out to be true as well. Given an arbitrary quantum state with pinned NONs and an $n_3=n_4$ degeneracy, expressed self-consistently according to \eqref{PsiBDpin} with respect to some choice of natural orbitals $\mathcal{B}_1=\{\ket{j}\}_{j=1}^6$. Then, there exists a (unitary) transformation of the natural orbitals
\begin{equation}
\ket{3}\rightarrow \ket{\tilde{3}}\,,\quad \ket{4}\rightarrow \ket{\tilde{4}}\,,\quad \ket{i}\rightarrow \ket{\tilde{i}}=\ket{i}\,,i=1,2,5,6\,,
\end{equation}
such that the state $\ket{\Psi}$ takes the form \eqref{PsiBDnone} with respect to the alternative choice $\mathcal{B}'_1=\{\ket{j'}\}_{j=1}^6$ of natural orbitals. The existence of such a unitary transformation of the degenerate natural orbitals
follows directly from the conditions \eqref{BDoff} and $n_3=n_4$. The reader may verify that this transformation
takes the form
\begin{equation}
\ket{\tilde{3}}\equiv \frac{c_{123}\ket{3}+c_{124}\ket{4}}{\sqrt{|c_{123}|^2+|c_{124}|^2}}\,,\quad \ket{\tilde{4}}\equiv \frac{c_{124}^\ast\ket{3}-c_{123}^\ast\ket{4}}{\sqrt{|c_{123}|^2+|c_{124}|^2}}\,,
\end{equation}
leading to
\begin{eqnarray}\label{PsiBD34}
\ket{\Psi} &=&\tilde{c}_{123}\ket{\tilde{1}\tilde{2}\tilde{3}}+\tilde{c}_{145}\ket{\tilde{1}\tilde{4}\tilde{5}}+
\tilde{c}_{246}\ket{\tilde{2}\tilde{4}\tilde{6}} \,. 
\end{eqnarray}
The corresponding transformed expansion coefficients follow as
\begin{eqnarray}
\tilde{c}_{123}&=&\sqrt{|c_{123}|^2+|c_{124}|^2}\,,\quad \tilde{c}_{145}=\frac{c_{124} c_{135}-c_{123} c_{145}}{\sqrt{|c_{123}|^2+|c_{124}|^2}}\nonumber \\
\tilde{c}_{246}&=&\frac{c_{124} c_{236}-c_{123} c_{246}}{\sqrt{|c_{123}|^2+|c_{124}|^2}}
\end{eqnarray}
and all the remaining coefficients $\tilde{c}_{\bd{i}}$ vanish.

\subsubsection{Geometric picture of the Borland-Dennis setting}
In the following we interpret these structure simplifications of the quantum state in case of pinning from a geometric point of view.
First, the polytope of attainable vectors $\bd{n}\equiv (n_4,n_5,n_6)$ is shown in Fig.~\ref{fig:fqmp} in gray.
\begin{figure}[h]
\centering
\includegraphics[width=0.4\textwidth]{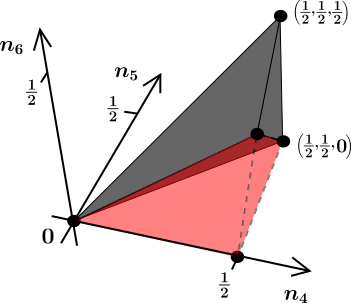}
\hspace{1.2cm}
\includegraphics[width=0.47\textwidth]{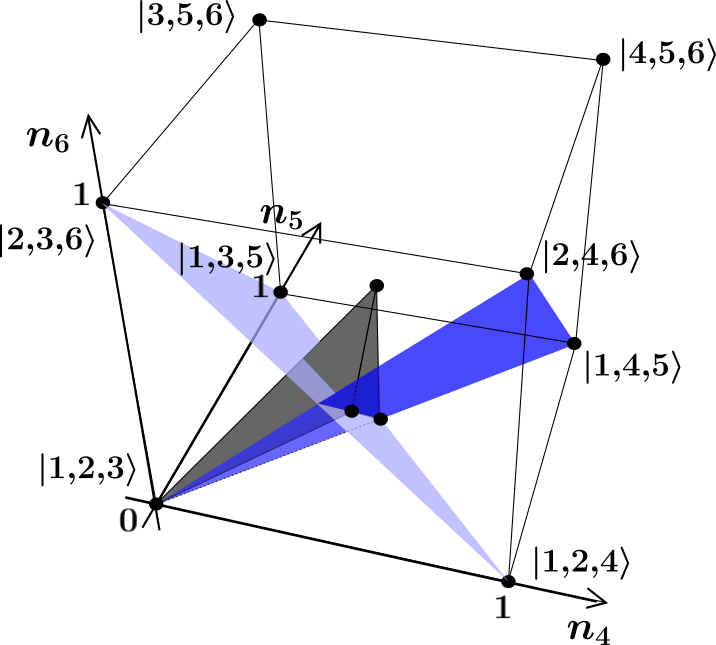}
\caption{Reduced polytope (gray) of possible (independent) NONs $(n_4,n_5,n_6)$ for the Borland-Dennis setting \ref{ex:BD}. Left: The Pauli simplex contains in addition the part shown in red, emphasizing that the GPCs are more restrictive than Pauli's original principle. Right: In case of pinning of the GPC \eqref{d=6b} only configurations $\bd{i}$ may contribute to $\ket{\Psi}$ that lie on the respective hyperplane (blue). In case of an additional degeneracy $n_3=n_4$ (i.e.~$n_4=\frac{1}{2}$) also those on the reflected hyperplane (light blue) may contribute (for more details see text).}
\label{fig:polytope}
\end{figure}
From the left side we can infer again that the GPCs are more restrictive than Pauli's exclusion principle constraints since the respective polytope $\mathcal{P}$ is a proper subset of the Pauli simplex $\Sigma$ (given by the polytope together with an extension shown in red). On the right side, the geometric picture as introduced in Section \ref{subsec:geom} is presented. The vector $\bd{n}$ of NONs follows as the center of mass of masses $|\bd{c}_{\bd{i}}|^2$ located at the positions $\bd{n}_{\bd{i}}$, the vertices of the Pauli hypercube. Their restrictions to the $(n_4,n_5,n_6)$-subspace are given by (recall Eq.~\eqref{NONconf})
\begingroup
\addtolength{\jot}{0.4em}
\begin{eqnarray}
&&\bd{n}_{123}=  \BDv{0}{0}{0}\,,\quad\bd{n}_{124}=  \BDv{1}{0}{0}\,,\quad\bd{n}_{135}=  \BDv{0}{1}{0}\,,\quad\bd{n}_{145}=  \BDv{1}{1}{0} \nonumber \\
&&\bd{n}_{236}=  \BDv{0}{0}{1}\,,\quad\bd{n}_{246}=  \BDv{1}{0}{1}\,,\quad\bd{n}_{356}=  \BDv{0}{1}{1}\,,\quad\bd{n}_{456}=  \BDv{1}{1}{1}\,.
\end{eqnarray}
\endgroup
In case of non-degenerate NONs, only the configurations $\bd{i}$ may contribute in the self-consistent expansion \eqref{PsiNO} according to \eqref{PsiBDnone} whose unordered spectra $\bd{n}_{\bd{i}}$ lie on the hyperplane corresponding to pinning (shown in blue). The same is still true in case of degeneracies $n_4=n_5$ or $n_5=n_6$. For a degeneracy $n_3=n_4$ (i.e.~$n_4=\frac{1}{2}$) and a generic choice of the natural orbitals in the $n_3=n_4$ subspace also the configurations $\bd{i}$ whose vectors $\bd{n}_{\bd{i}}$ lie on the light blue hyperplane may contribute. This latter hyperplane is given by the swapping $n_3 \leftrightarrow n_4$ of the blue hyperplane. Yet, according to \eqref{PsiBD34} there exists at least one basis $\mathcal{B}_1$ of natural orbitals with respect to which the weights on the light blue hyperplane are transformed away and would lie solely on the blue hyperplane.


The analysis of the Borland-Dennis setting suggests the following implications of pinning by a GPC $D>0$ in a general setting $(N,d)$: In case of non-degenerate NONs there is no ambiguity since the natural orbitals are unique and only those configurations $\bd{i}$ may contribute to $\ket{\Psi}$ whose unordered spectra $\bd{n}_{\bd{i}}$ (recall Eq.~\eqref{NONgeom}) lie on the respective hyperplane corresponding to pinning, $D\equiv 0$. In case of degenerate NONs there exist at least one basis $\mathcal{B}_1$ of natural orbitals with respect to which the original selection rule for non-degenerate NONs applies.

Although those main results of our work (see Theorems \ref{thm:pin1}, \ref{thm:pin2} and Corollaries \ref{cor:SR1}, \ref{cor:SR2} below) could be presented for both cases of non-degenerate and degenerate NONs together, we split them. This has the advantage that at least the results for non-degenerate NONs can be stated in a less technical form, namely not involving the ambiguity of natural orbital bases. For the proofs of various results we refer the reader to Part II.

\subsection{Implications of non-degenerate pinned occupation numbers}\label{subsec:pin}
In case of non-degenerate NONs the structural implications of pinning can be stated as
\begin{theorem}[Pinning of non-degenerate NONs]\label{thm:pin1}
Let $\ket{\Psi}\in \HN$ be an $N$-fermion quantum state whose non-degenerate NONs $\bd{n}$ saturate a GPC, $D(\bd{n})=0$ and denote the family of  $\ket{\Psi}$'s unique natural orbitals by $\mathcal{B}_1$. Then, $\ket{\Psi}$ lies in the zero-eigenspace of the respective $\hat{D}_{\mathcal{B}_1}$-operator (recall Definition \ref{def:Lhat}), i.e.
\begin{equation}
D(\bd{n})=0 \quad \Rightarrow \quad\hat{D}_{\mathcal{B}_1} \ket{\Psi} = 0\,.
\end{equation}
\end{theorem}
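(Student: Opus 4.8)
The plan is to exploit the key structural difference between $\hat{D}_{\mathcal{B}_1}$ and the positive semi-definite operators $\hat{S}_{\mathcal{B}_1}^{(r,s)}$ of Theorem \ref{thm:CAS}: the operator $\hat{D}_{\mathcal{B}_1}$ is still diagonal in the configuration basis $\mathcal{B}_N$ (by Eq.~\eqref{eq:Lhat-action}, with eigenvalue $D(\bd{n}_{\bd i})$ on $\ket{\bd i}$), but it has eigenvalues of both signs. Hence $\bra{\Psi}\hat{D}_{\mathcal{B}_1}\ket{\Psi}=D(\bd n)=0$ no longer immediately forces $\ket{\Psi}$ into the kernel. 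The idea is to show that, \emph{given the self-consistency constraints that a natural orbital expansion imposes}, the positive and negative contributions cannot cancel, so the state must in fact be supported on the zero-eigenspace.

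First I would set up the bookkeeping: write the self-consistent expansion \eqref{PsiNO}, split the support of $\ket{\Psi}$ into $\mathcal{I}_+,\mathcal{I}_0,\mathcal{I}_-$ according to the sign of $D(\bd{n}_{\bd i})$, and record the weights $w_\pm=\sum_{\bd i\in\mathcal{I}_\pm}|c_{\bd i}|^2$. The hypothesis gives $\sum_{\bd i} |c_{\bd i}|^2 D(\bd{n}_{\bd i})=0$, i.e.\ a balance between the $\mathcal{I}_+$ and $\mathcal{I}_-$ weights. The goal is to show $\mathcal{I}_+=\mathcal{I}_-=\emptyset$. Here I would use that $\bd n$ is a \emph{vertex-like} extremal point: since $D\geq 0$ is a facet-defining inequality of the polytope $\mathcal P$ and $\bd n$ lies on the corresponding hyperplane, $\bd n$ is a boundary point of $\mathcal P$; combined with $\bd n=\sum_{\bd i}|c_{\bd i}|^2 \bd n_{\bd i}$ and the fact that each $\bd n_{\bd i}$, being a vertex of the Pauli hypercube with $\|\bd n_{\bd i}\|_1=N$, is itself $N$-representable, one learns that the only $\bd n_{\bd i}$ that can appear with nonzero weight are those lying on the face $D\equiv 0$ together with those whose ``excursion'' to one side is exactly compensated — and the non-degeneracy of $\bd n$ is what rules out the latter. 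Concretely, I would argue that if some $\bd i_0\in\mathcal{I}_+$ contributed, the center-of-mass identity would force compensating mass in $\mathcal{I}_-$ at positions $\bd n_{\bd i}$ that, after re-ordering their coordinates decreasingly, would violate either the strict ordering $n_1>\cdots>n_d$ or another GPC; this is where the hypothesis $n_j\neq n_k$ for $j\neq k$ enters decisively.

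I would then transfer this from the level of occupation-number vectors to the level of the state itself. Because $\hat{D}_{\mathcal{B}_1}$ is diagonal in $\mathcal{B}_N$, showing $c_{\bd i}=0$ for every $\bd i$ with $D(\bd n_{\bd i})\neq 0$ is exactly the assertion that $\hat{D}_{\mathcal{B}_1}\ket{\Psi}=0$. The off-diagonal self-consistency conditions (the analogues of \eqref{BDoff}, namely $\bra{i}\rho_1\ket{j}=0$ for $i\neq j$) are the extra leverage that pins down the argument in borderline cases: they relate coefficients of configurations differing in a single orbital, and they prevent the kind of "conspiracy" of cancellations that a naive sign-counting argument would permit. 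I expect the genuine obstacle to be precisely the verification that no configuration on the \emph{wrong} side of the hyperplane $D\equiv 0$ can survive — equivalently, that among the hypercube vertices $\bd n_{\bd i}$ appearing in the expansion, those on the negative side of $D$ cannot be balanced against those on the positive side once the full system of GPCs and the strict ordering are imposed. In the Borland–Dennis case this was handled by the explicit inequality \eqref{oldCS4}; in the general setting one needs a polytope-geometric substitute, showing that the relevant face of the Pauli hypercube maps onto the face $D\equiv 0$ of $\mathcal P$, and I would expect this combinatorial–geometric step to carry the weight of the proof. The detailed execution is deferred to Part II.
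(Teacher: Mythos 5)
You have correctly located the difficulty ($\hat{D}_{\mathcal{B}_1}$ is diagonal in $\mathcal{B}_N$ but indefinite, so $\bra{\Psi}\hat{D}_{\mathcal{B}_1}\ket{\Psi}=0$ does not by itself kill the support off the hyperplane), and you have correctly reformulated the claim as $c_{\bd{i}}=0$ whenever $D(\bd{n}_{\bd{i}})\neq 0$. But the step that is supposed to do the work is missing, and the mechanism you sketch for it does not function. The supporting-hyperplane/convexity argument fails at the outset: the points $\bd{n}_{\bd{i}}$ entering the center-of-mass identity \eqref{NONgeom} are \emph{unordered} hypercube vertices, which lie on \emph{both} sides of the extended hyperplane $D\equiv 0$ (this is exactly the statement that $\hat{D}_{\mathcal{B}_1}$ has eigenvalues of both signs), so the fact that $\bd{n}$ sits on a facet of $\mathcal{P}$ places no constraint on a convex combination of them. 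Your fallback — that compensating mass in $\mathcal{I}_-$ would ``violate the strict ordering or another GPC'' — is asserted, not argued, and it cannot be established at the level of the weights $|c_{\bd{i}}|^2$ alone: the theorem is a statement about wavefunctions, and a probability distribution over hypercube vertices with barycenter on the facet need not be supported on the facet's preimage. You name the off-diagonal self-consistency conditions $\bra{i}\rho_1\ket{j}=0$ as the ``extra leverage'' but never use them, and you explicitly defer ``the detailed execution'' of the decisive combinatorial--geometric step. That step \emph{is} the proof; what remains is a restatement of the theorem plus bookkeeping.

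For comparison, the route the paper takes (the proof is deferred to Part~II, but the idea is the variational one underlying all such results): consider the deformed states $\ket{\Psi_t}\propto e^{-t\hat{D}_{\mathcal{B}_1}}\ket{\Psi}$. Using $\bra{\Psi}\hat{D}_{\mathcal{B}_1}\ket{\Psi}=D(\bd{n})=0$, the diagonal entries of $\rho_1(t)$ in the basis $\mathcal{B}_1$ satisfy
\begin{equation*}
D\big(\bd{n}(t)\big) \;=\; -2t\,\big\|\hat{D}_{\mathcal{B}_1}\ket{\Psi}\big\|^2 + O(t^2)\,,
\end{equation*}
and non-degeneracy of $\bd{n}$ guarantees that, for small $t>0$, these diagonal entries agree with the actual (still correctly ordered) NONs of $\ket{\Psi_t}$ to first order. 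If $\hat{D}_{\mathcal{B}_1}\ket{\Psi}\neq 0$, the vector of NONs of the ($N$-representable!) state $\ket{\Psi_t}$ would then leave the polytope $\mathcal{P}$, a contradiction. Note that this argument uses global $N$-representability of nearby states rather than any cancellation analysis within the expansion of $\ket{\Psi}$ itself; it also makes transparent where non-degeneracy enters (first-order perturbation theory for the eigenvalues of $\rho_1$ and stability of the ordering), whereas in your sketch the role of non-degeneracy is invoked but never made operational.
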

\noindent
It is worth noticing that Theorem \ref{thm:pin1} applies to various saturated GPC simultaneously.

Theorem \ref{thm:pin1} implies immediate structural simplifications for the state $\ket{\Psi}$ which are particularly well-pronounced in the self-consistent expansion \eqref{PsiNO} as already illustrated above:
\begin{corollary}[Selection rule for non-degenerate NONs]\label{cor:SR1}
Let $\ket{\Psi}\in \HN$ be an $N$-fermion quantum state whose non-degenerate NONs $\bd{n}$ saturate a GPC, $D(\bd{n})=0$. Then, only those configurations $\bd{i}$ may contribute in the self-consistent expansion \eqref{PsiNO} of $\ket{\Psi}$ whose unordered spectra $\bd{n}_{\bd{i}}$ (recall Eq.~\eqref{NONconf}) lie on the hyperplane defined by $D(\bd{n}_{\bd{i}})=0$. In other words, for each configuration $\bd{i}$ we have
\begin{equation}
D(\bd{n}_{\bd{i}})\neq 0\quad \Rightarrow \quad c_{\bd{i}} = 0\,.
\end{equation}
\end{corollary}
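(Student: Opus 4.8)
The plan is to obtain Corollary~\ref{cor:SR1} as an immediate consequence of Theorem~\ref{thm:pin1}, in exact analogy with the ``$\Rightarrow$'' part of the proof of Theorem~\ref{thm:CAS}. First I would invoke Theorem~\ref{thm:pin1}: since the non-degenerate NONs saturate the GPC, $D(\bd{n})=0$, and since non-degeneracy makes the natural orbital basis $\mathcal{B}_1$ (and hence the induced configuration basis $\mathcal{B}_N$) unique up to phases, one has $\hat{D}_{\mathcal{B}_1}\ket{\Psi}=0$.

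Next I would use that $\hat{D}_{\mathcal{B}_1}$ is diagonal in $\mathcal{B}_N$. Writing the GPC as the affine form $D=\kappa^{(0)}+\bd{\kappa}\cdot\bd{n}$, the geometric formula \eqref{eq:Lhat-action} --- extended trivially by the constant term, which acts as $\kappa^{(0)}\Id$ on the fixed-$N$ sector $\HN$ --- gives $\hat{D}_{\mathcal{B}_1}\ket{\bd{i}}=D(\bd{n}_{\bd{i}})\ket{\bd{i}}$ with integer eigenvalue $D(\bd{n}_{\bd{i}})$ for every configuration state $\ket{\bd{i}}\in\mathcal{B}_N$. Substituting the self-consistent expansion \eqref{PsiNO}, $\ket{\Psi}=\sum_{\bd{i}}c_{\bd{i}}\ket{\bd{i}}$, into $\hat{D}_{\mathcal{B}_1}\ket{\Psi}=0$ and using linearity yields $\sum_{\bd{i}}c_{\bd{i}}\,D(\bd{n}_{\bd{i}})\,\ket{\bd{i}}=0$. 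Orthonormality (hence linear independence) of the $\ket{\bd{i}}$ then forces $c_{\bd{i}}\,D(\bd{n}_{\bd{i}})=0$ for each $\bd{i}$, i.e.\ $D(\bd{n}_{\bd{i}})\neq 0\Rightarrow c_{\bd{i}}=0$. Thus the only configurations that can contribute are those whose unordered spectra $\bd{n}_{\bd{i}}$ lie on the hyperplane $D\equiv 0$, which is exactly the claimed selection rule.

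The genuine content here is entirely in Theorem~\ref{thm:pin1}, whose proof is deferred to Part~II; once it is granted, the corollary needs nothing beyond orthonormality of $\mathcal{B}_N$ and the diagonal action \eqref{eq:Lhat-action}, so the ``main obstacle'' is really upstream of this statement. It is worth stressing why this cannot simply reuse the elementary argument behind Theorem~\ref{thm:CAS}: there $\hat{S}_{\mathcal{B}_1}^{(r,s)}$ is positive semi-definite, so $\bra{\Psi}\hat{S}_{\mathcal{B}_1}^{(r,s)}\ket{\Psi}=0$ alone pins $\ket{\Psi}$ into the kernel, whereas $\hat{D}_{\mathcal{B}_1}$ generically has eigenvalues of both signs, so $\bra{\Psi}\hat{D}_{\mathcal{B}_1}\ket{\Psi}=D(\bd{n})=0$ does \emph{not} by itself imply $\hat{D}_{\mathcal{B}_1}\ket{\Psi}=0$ --- that implication is precisely the nontrivial input provided by Theorem~\ref{thm:pin1}. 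A minor point to flag is that Definition~\ref{def:Lhat} and \eqref{eq:Lhat-action} are phrased for homogeneous linear forms while a GPC carries a constant, but on $\HN$ that constant contributes $\kappa^{(0)}\Id$ and affects neither diagonality nor the eigenvalues; and non-degeneracy is what keeps the statement clean by removing any ambiguity in ``the'' selection rule, the degenerate case being handled separately by Theorem~\ref{thm:pin2} and Corollary~\ref{cor:SR2}.
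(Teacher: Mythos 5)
Your proposal is correct and follows exactly the route the paper intends: the corollary is an immediate consequence of Theorem~\ref{thm:pin1} combined with the diagonality of $\hat{D}_{\mathcal{B}_1}$ on the configuration basis, precisely mirroring the final step in the proof of Theorem~\ref{thm:CAS}. Your remarks on the constant term $\kappa^{(0)}\Id$ and on why the positive-semidefiniteness shortcut is unavailable here are accurate and consistent with the paper's own discussion.
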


We present an example which illustrates Theorem \ref{thm:pin1} and the corresponding selection rule, Corollary \ref{cor:SR1}:

\begin{example}\label{ex:pin38}
We consider non-degenerate NONs in the setting $(N,d)=(3,8)$ that are saturating one of the GPCs, namely
\begin{equation}\label{GPC38}
D(\bd{n})\equiv9-19n_1-11n_2+21n_3+13n_4+5n_5+5n_6-3n_7-11n_8\geq0\,.
\end{equation}
According to Theorem \ref{thm:pin1} any corresponding quantum state $\ket{\Psi}$ has to lie in the zero-eigenspace of the respective $\hat{D}_{\mathcal{B}_1}$-operator,
\begin{equation}\label{Dhat38}
\underbrace{\big(9-19\hat{n}_1-11\hat{n}_2+21\hat{n}_3+13\hat{n}_4+5\hat{n}_5+5\hat{n}_6-3\hat{n}_7-11\hat{n}_8\big)}_{
\hat{D}_{\mathcal{B}_1}}\ket{\Psi} = 0\,.
\end{equation}
Corollary \ref{cor:SR1} then identifies all configurations which may contribute to the self-consistent expansion of $\ket{\Psi}$, namely
$\bd{i}=(1,2,3), (1,5,6), (1,3,8), (2,5,7), (5,7,8), (2,4,8), (1,4,7),$ $(2,6,7), (6,7,8)$. This reduction of ${8\choose 3}=56$
configurations to just $9$ highlights the remarkable implications of pinning as an effect in the one-particle picture on the structure of the corresponding many-fermion quantum state.
\end{example}

\subsection{Implications of degenerate pinned occupation numbers}\label{subsec:pindeg}
Based on the analysis of pinning by degenerate NONs in the Borland-Dennis setting (Section \ref{subsec:BD}) one may expect the following
generalization of Theorem \ref{thm:pin1} to degenerate NONs:
\begin{conjecture}\label{conj:pin2}
Let $\ket{\Psi}\in \HN$ be an $N$-fermion quantum state whose degenerate NONs $\bd{n}$ saturate some (possibly several) GPCs. Then, there exists an orthonormal basis $\mathcal{B}_1$ of natural orbitals such that $\ket{\Psi}$ lies in the zero-eigenspace of the respective $\hat{D}_{\mathcal{B}_1}$-operators of various saturated GPCs (recall Definition \ref{def:Lhat}), i.e.
\begin{equation}
\exists\mathcal{B}_1:\quad D(\bd{n})=0 \quad \Rightarrow \quad\hat{D}_{\mathcal{B}_1} \ket{\Psi} = 0\,.
\end{equation}
\end{conjecture}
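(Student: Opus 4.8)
The plan is to reduce the degenerate case to the non-degenerate one of Theorem \ref{thm:pin1} by choosing the natural orbital basis carefully inside each degeneracy eigenspace. Fix $\ket{\Psi}$, its vector $\bd{n}$ of (non-increasingly ordered) NONs, and partition the index set $\{1,\dots,d\}$ into blocks $B_1,\dots,B_m$ of equal NONs, so that within each block the natural orbitals are determined only up to a unitary rotation $U_k\in \mathrm{U}(|B_k|)$. Let $D_1,\dots,D_p$ be the GPCs saturated by $\bd{n}$. I would first observe that each $\hat D_{a,\mathcal{B}_1}$ is block-diagonal with respect to the degeneracy blocks in the following sense: since the coefficient vector $\bd{\kappa}_a$ of a GPC need \emph{not} be constant on a block $B_k$, the operator $\hat D_{a,\mathcal{B}_1}$ does change under $U_k$, but the real subtlety is that $\ket{\Psi}$ lies in some fixed eigenspace of the total number operator restricted to each block, and the question is whether the $U_k$ can be chosen to move $\ket{\Psi}$ into the \emph{zero} eigenspace of $\hat D_{a,\mathcal{B}_1}$ simultaneously for all $a$.

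The key step is a continuity/deformation argument. Introduce a one-parameter family $h(\lambda)$ of one-particle Hamiltonians that lifts the degeneracies, i.e.\ with $h(\lambda)$ having the same eigenvectors as $\rho_1$ but eigenvalues perturbed so that for $\lambda>0$ they are all distinct, and $h(0)$ constant on each block; alternatively, perturb $\rho_1$ itself within the set of admissible one-particle reduced density operators so that the perturbed $\bd{n}(\lambda)$ is non-degenerate and still pinned by the same GPCs (this is possible because the pinned face of the polytope $\mathcal{P}$ generically contains non-degenerate points in its relative interior — this needs to be checked). For such a perturbation one would need a pre-image $\ket{\Psi(\lambda)}\to\ket{\Psi}$ of $N$-fermion states; here I would use that the pre-image map $\kpsi\mapsto\bd{n}$ is an open/proper map onto $\mathcal{P}$ (Klyachko), so a path in $\mathcal{P}$ staying on the pinned face lifts to a path of quantum states. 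Along this path Theorem \ref{thm:pin1} applies for $\lambda>0$, giving $\hat D_{a,\mathcal{B}_1(\lambda)}\ket{\Psi(\lambda)}=0$; the natural orbital bases $\mathcal{B}_1(\lambda)$ then converge (after passing to a subsequence, since $\mathrm{U}(d)$ is compact) to a limiting basis $\mathcal{B}_1$, and by continuity $\hat D_{a,\mathcal{B}_1}\ket{\Psi}=0$ for all $a$, which is the claim.

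I expect the main obstacle to be two intertwined points: (i) establishing that the pinned face of $\mathcal{P}$ actually contains non-degenerate points arbitrarily close to $\bd{n}$ — a degenerate pinned $\bd{n}$ might lie on a lower-dimensional sub-face all of whose points are forced to be degenerate (indeed the Borland-Dennis discussion of the $n_3=n_4$ case shows degeneracy can be \emph{enforced} by saturation plus ordering), in which case the naive perturbation fails and one must instead perturb only those degeneracies that are \emph{not} enforced, handling the enforced ones by the explicit orbital-rotation construction as in Eq.~\eqref{PsiBD34}; and (ii) controlling the limit of $\mathcal{B}_1(\lambda)$ when the limiting NONs are degenerate — convergence of eigenprojections onto degeneracy blocks is automatic, but one must argue that the limiting frame inside each block can still be taken to be a valid natural orbital basis for $\ket{\Psi}$ (it is, since any orthonormal basis of a degeneracy eigenspace of $\rho_1$ qualifies) and that the zero-eigenspace condition survives the limit (it does, by continuity of $\lambda\mapsto\hat D_{a,\mathcal{B}_1(\lambda)}$ as an operator). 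A cleaner alternative, which I would also pursue, is to avoid analysis altogether: work directly with the self-consistency conditions analogous to Eqs.~\eqref{BDoff}, show that within each degeneracy block the off-diagonal vanishing conditions on $\rho_1$ together with $D(\bd n_{\bd i})$ being an \emph{integer} linear functional force the block-rotation $U_k$ that simultaneously diagonalizes a certain finite family of Hermitian forms built from the saturated GPCs to exist — this is essentially a simultaneous-diagonalization problem whose solvability follows from the constraints, and it is this reformulation that Part II presumably carries out in full rigor.
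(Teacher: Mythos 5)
There is a genuine gap here, and it is worth stating up front that the paper itself does \emph{not} prove this statement: it is deliberately labelled a Conjecture, and the authors only establish the strictly weaker Theorem~\ref{thm:pin2} (exactly \emph{one} saturated GPC, plus a technical assumption verified case-by-case for all known GPCs). Your proposal does not close the distance between that weaker result and the full conjecture. The central unjustified step is the lifting of the perturbed path $\bd{n}(\lambda)$ back to a path of quantum states $\ket{\Psi(\lambda)}$ \emph{converging to the given} $\ket{\Psi}$. Surjectivity (or openness/properness) of the map $\kpsi\mapsto\bd{n}$ onto $\mathcal{P}$ only guarantees that \emph{some} state sits over each $\bd{n}(\lambda)$; it says nothing about being able to choose these states so that $\ket{\Psi(\lambda)}\to\ket{\Psi}$, which is what your limit argument requires. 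The fibers of this map are large and their behaviour near degenerate, pinned $\bd{n}$ is precisely the delicate object under study; assuming a continuous section through the prescribed endpoint begs the question. Your acknowledged obstacle (i) is also not a side issue: when the degeneracy is \emph{enforced} by saturation plus ordering (as in the Borland--Dennis $n_3=n_4$ case), the perturbation within the pinned face simply does not exist, and you offer no general replacement beyond pointing at the explicit $(3,6)$ rotation.

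The ``cleaner alternative'' in your last paragraph fails for a concrete algebraic reason: rotating $\ket{\Psi}$ into the common zero-eigenspace of \emph{several} operators $\hat{D}_{k,\mathcal{B}_1}$ by a single choice of block rotations $U_1,\dots,U_m$ is a simultaneous condition, and the relevant Hermitian forms on a degeneracy block built from different saturated GPCs need not commute, so no single $U_k$ diagonalizes them all. This incompatibility is exactly why the paper's Theorem~\ref{thm:pin2} is restricted to a single saturated GPC and still needs Assumption~13 of Part~II; your reformulation rediscovers the difficulty rather than resolving it. A correct review of your own argument should conclude that you have (at best) sketched a plausible strategy for the single-GPC, non-enforced-degeneracy case, not a proof of the conjecture as stated.
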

\noindent There are actually a number of reasons (highlighted in Part II which presents various mathematical proofs) why the generalization of Theorem \ref{thm:pin1} to non-degenerate NONs and its proof are quite involved.

In the following we present a weaker extension of Theorem \ref{thm:pin1} to degenerate NONs. It refers to the saturation of exactly one GPC. Its proof requires in addition the validity of a technical assumption (presented as Assumption 13 in Part II)
which we could verify for all GPCs known so far. Hence, there is little doubt that the assumption is always valid and the corresponding addition to the following theorem might be unnecessary.

\begin{theorem}[Pinning of degenerate NONs]\label{thm:pin2}
Let $\ket{\Psi}\in \HN$ be an $N$-fermion quantum state whose degenerate NONs $\bd{n}$ saturate exactly one GPC, $D(\bd{n})=0$ and assume that the technical Assumption 13 from Part II is met. Then, there exists an orthonormal basis $\mathcal{B}_1$ of natural orbitals such that $\ket{\Psi}$ lies in the zero-eigenspace of the respective $\hat{D}_{\mathcal{B}_1}$-operator (recall Definition \ref{def:Lhat}), i.e.
\begin{equation}
D(\bd{n})=0 \quad \Rightarrow \quad \exists\mathcal{B}_1:\quad\hat{D}_{\mathcal{B}_1} \ket{\Psi} = 0\,.
\end{equation}
\end{theorem}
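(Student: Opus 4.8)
The plan is to reduce the degenerate case to the non-degenerate Theorem~\ref{thm:pin1} by a perturbative argument in the spirit of the explicit $n_3=n_4$ analysis of the Borland-Dennis setting. Fix $\ket{\Psi}$ with $D(\bd n)=0$ and denote by $\mH_{\mathrm{deg}}$ the direct sum of the eigenspaces of $\rho_1$ corresponding to distinct degenerate eigenvalues. Any choice of orthonormal basis within each such eigenspace yields a valid family $\mathcal{B}_1$ of natural orbitals and hence a valid self-consistent expansion \eqref{PsiNO}; the $\hat{D}_{\mathcal{B}_1}$-operator, however, is \emph{independent} of this choice because its coefficients on degenerate orbitals coincide --- this is precisely where the assumption ``$\bd n$ saturates exactly one GPC'' and Assumption~13 enter: they guarantee that $D$ does not distinguish orbitals with equal NON, i.e. $\kappa^{(j)}$ takes a common value on each degeneracy block, so that $\hat{D}_{\mathcal{B}_1}$ is a well-defined operator not depending on the basis within $\mH_{\mathrm{deg}}$. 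Thus it suffices to exhibit \emph{one} $\mathcal{B}_1$ for which $\hat{D}_{\mathcal{B}_1}\ket{\Psi}=0$.

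Next I would perturb the state to break the degeneracy. Write $\rho_1=\sum_j n_j \ket{j}\!\bra{j}$ for some fixed choice of natural orbitals, and consider a one-parameter family $\ket{\Psi_t}$ of $N$-fermion states with $\ket{\Psi_0}=\ket{\Psi}$ obtained by acting with a small orbital rotation that lifts the degeneracies generically; concretely, one can take $\ket{\Psi_t}=U_t\ket{\Psi}$ where $U_t$ is the second-quantized lift of a one-particle unitary $u_t = e^{itK}$ with $K$ Hermitian and chosen so that $\rho_1(t)\equiv N\,\mathrm{Tr}_{N-1}[\ket{\Psi_t}\!\bra{\Psi_t}]$ has non-degenerate spectrum for small $t\neq 0$. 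Crucially, the ordered spectrum $\bd n(t)$ of $\rho_1(t)$ is continuous in $t$ and $\bd n(0)=\bd n$ lies on the face $D\equiv 0$ of the polytope $\mathcal{P}$; but for $t\neq 0$ the point $\bd n(t)$ may move \emph{off} this face. So this naive perturbation does not immediately work, and the real content of the proof is to choose the deformation so that pinning is \emph{preserved}, $D(\bd n(t))=0$ for all small $t$.

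The key step, and the main obstacle, is therefore constructing a pinning-preserving deformation. The idea is to restrict the rotation to the degenerate subspace $\mH_{\mathrm{deg}}$: if $u_t$ acts as the identity outside $\mH_{\mathrm{deg}}$ and within each degeneracy block acts as a block-diagonal unitary, then $\rho_1(t)=\rho_1$ is \emph{unchanged} and so trivially $D(\bd n(t))=0$ --- but then the NONs stay degenerate and nothing is gained. The resolution, as in the Borland-Dennis example, is that one does not perturb $\ket{\Psi}$ at all; instead one uses the freedom in the \emph{choice} of natural-orbital basis inside each degeneracy block, together with the off-diagonal self-consistency conditions on $\rho_1$ (the analogues of \eqref{BDoff}), to select a block-basis in which the coefficients $c_{\bd i}$ supported on configurations with $D(\bd n_{\bd i})\neq 0$ are transformed away. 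Making this precise requires: (i) Theorem~\ref{thm:pin1} applied to a genuinely non-degenerate reference state lying on the same polytope face --- obtained by a separate limiting argument using that the polytope face is itself a polytope whose relative interior contains non-degenerate points, and a lower-semicontinuity/closedness argument for the support; (ii) showing that the allowed block-rotations act transitively enough on the space of admissible coefficient-tuples to reach the desired form; and (iii) invoking Assumption~13 to rule out the phenomenon (illustrated by the $n_4=n_5$ case in Borland-Dennis) where an ordering-induced degeneracy secretly forces saturation of a \emph{second} GPC, which would change the relevant zero-eigenspace. Steps (i) and (ii) are where the detailed polytope combinatorics of Part~II must be deployed; I expect (i) --- the passage from non-degenerate pinned states to degenerate ones via a continuity argument on the natural support, avoiding the collapse described in the ``two reasons'' paragraph of Section~\ref{subsec:BD} --- to be the genuinely hard part.
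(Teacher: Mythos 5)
The paper does not actually prove Theorem~\ref{thm:pin2} in this part --- the proof is explicitly deferred to Part~II --- so your proposal must stand on its own, and it does not. The central mechanism you make concrete is wrong: you claim that $\hat{D}_{\mathcal{B}_1}$ is independent of the choice of basis inside each degeneracy block because ``$\kappa^{(j)}$ takes a common value on each degeneracy block,'' but the paper's own worked example refutes this. In the Borland--Dennis setting with $n_3=n_4$ the saturated constraint is $D=2-(n_1+n_2+n_4)$, so $\kappa^{(3)}=0\neq-1=\kappa^{(4)}$ on the degenerate pair; this is precisely why Section~\ref{subsec:BD} must construct the distinguished orbitals $\ket{\tilde{3}},\ket{\tilde{4}}$, and why the theorem carries the existential quantifier $\exists\,\mathcal{B}_1$ rather than a universal one. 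Your later paragraph, which correctly says one must exploit the block-rotation freedom to transform the offending coefficients away, directly contradicts this opening claim; the two cannot both be part of the same proof.

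The perturbative device is also structurally broken: if $\ket{\Psi_t}=U_t\ket{\Psi}$ with $U_t$ the second-quantized lift of a one-particle unitary $u_t$, then $\rho_1(t)=u_t\,\rho_1\,u_t^\dagger$ has the \emph{same} spectrum as $\rho_1$ for every $t$, so no orbital rotation can ever lift the degeneracy of the NONs and the family never reaches the regime where Theorem~\ref{thm:pin1} applies. You would need a deformation of the state that is not an orbital rotation, and then simultaneously preserving $D(\bd{n}(t))=0$ and controlling the limit of natural orbitals and supports as $t\to 0$ is exactly the content you label (i)--(iii) and defer to ``the detailed polytope combinatorics of Part~II,'' including the actual role of Assumption~13, which you can only guess at. In sum: the one claim you make precise is false, the one construction you make precise cannot work, and the genuinely hard steps are acknowledged but not carried out, so this is a research plan rather than a proof.
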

\noindent Despite the ambiguity of the natural orbital basis $\mathcal{B}_1$ it is worth recalling that the natural orbitals $\{\ket{j}\}$ are still referring to the non-increasingly ordered NONs (see also \eqref{1RDOdiag}).

In complete analogy to Theorem \ref{thm:pin1} and Corollary \ref{cor:SR1}, Theorem \ref{thm:pin2} implies immediately a corresponding selection rule identifying all configurations which may contribute to $\ket{\Psi}$ in case of pinning:
\begin{corollary}[Selection rule for degenerate NONs]\label{cor:SR2}
Let $\ket{\Psi}\in \HN$ whose degenerate NONs $\bd{n}$ saturate exactly one GPC, $D(\bd{n})=0$ and assume that the
technical Assumption 13 from Part II is met.  Then, there exists an orthonormal basis $\mathcal{B}_1$ of natural orbitals such that only configurations $\bd{i}$ may contribute to the self-consistent expansion \eqref{PsiNO} of $\ket{\Psi}$ whose unordered spectra $\bd{n}_{\bd{i}}$ (recall Eq.~\eqref{NONconf}) lie on the
the hyperplane $D\equiv 0$, i.e.
\begin{equation}
\exists \mathcal{B}_1\,\mbox{such that}\,\forall \bd{i}:\qquad D(\bd{n}_{\bd{i}})\neq 0 \quad \Rightarrow \quad c_{\bd{i}}=0\,.
\end{equation}
\end{corollary}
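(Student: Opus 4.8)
The plan is to deduce Corollary \ref{cor:SR2} directly from Theorem \ref{thm:pin2} in exactly the same way Corollary \ref{cor:SR1} follows from Theorem \ref{thm:pin1}. So the whole content of the proof is a short argument of the form ``pinning $\Rightarrow$ $\hat D_{\mathcal{B}_1}\ket{\Psi}=0$ for some $\mathcal{B}_1$, and this operator identity forces the vanishing of the offending coefficients.'' Concretely, first I would invoke Theorem \ref{thm:pin2} (under Assumption 13) to obtain a basis $\mathcal{B}_1$ of natural orbitals with $\hat D_{\mathcal{B}_1}\ket{\Psi}=0$. Second, I would recall from Section \ref{subsec:geom}, in particular the geometric action formula \eqref{eq:Lhat-action}, that for $D(\bd n)=\kappa^{(0)}+\bd\kappa\cdot\bd n$ the induced operator $\hat D_{\mathcal{B}_1}$ is diagonal in the configuration basis $\mathcal{B}_N$ built from $\mathcal{B}_1$, with $\hat D_{\mathcal{B}_1}\ket{\bd i}=\big(\kappa^{(0)}+\bd\kappa\cdot\bd n_{\bd i}\big)\ket{\bd i}=D(\bd n_{\bd i})\ket{\bd i}$.

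Then I would substitute the self-consistent expansion $\ket{\Psi}=\sum_{\bd i}c_{\bd i}\ket{\bd i}$ from \eqref{PsiNO} into the identity $\hat D_{\mathcal{B}_1}\ket{\Psi}=0$. Using diagonality this gives $\sum_{\bd i}D(\bd n_{\bd i})\,c_{\bd i}\,\ket{\bd i}=0$, and since the $\ket{\bd i}$ are linearly independent it follows that $D(\bd n_{\bd i})\,c_{\bd i}=0$ for every $\bd i$. Hence $D(\bd n_{\bd i})\neq 0\Rightarrow c_{\bd i}=0$, which is precisely the claimed selection rule with respect to that basis $\mathcal{B}_1$. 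Finally I would remark, for completeness, that geometrically $D(\bd n_{\bd i})\neq 0$ means $\bd n_{\bd i}$ does not lie on the hyperplane $D\equiv 0$, so the surviving configurations are exactly those whose unordered spectra sit on that hyperplane, matching the statement. One small point to flag: the expansion \eqref{PsiNO} is the expansion of $\ket{\Psi}$ in \emph{its own} natural orbitals, so one should make clear that we are using the particular $\mathcal{B}_1$ delivered by Theorem \ref{thm:pin2} throughout, and that the coefficients $c_{\bd i}$ in the statement are understood relative to that same $\mathcal{B}_1$.

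There is essentially no obstacle here: the corollary is a routine consequence of Theorem \ref{thm:pin2}, the real work having been offloaded into that theorem (and into Assumption 13, whose verification is deferred to Part II). The only thing requiring a word of care is the bookkeeping around the non-uniqueness of natural orbitals in the degenerate case — one must not silently switch bases between ``the $\mathcal{B}_1$ from Theorem \ref{thm:pin2}'' and ``some other natural orbital basis,'' since the selection rule need not hold for all of them. Keeping a single fixed $\mathcal{B}_1$ throughout the argument resolves this cleanly, and the proof can be kept to just a few lines.
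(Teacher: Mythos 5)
Your proposal is correct and follows exactly the route the paper intends: Corollary \ref{cor:SR2} is obtained from Theorem \ref{thm:pin2} by the same diagonality argument used for Theorem \ref{thm:CAS} and Corollary \ref{cor:SR1}, namely that $\hat{D}_{\mathcal{B}_1}\ket{\bd{i}}=D(\bd{n}_{\bd{i}})\ket{\bd{i}}$ together with linear independence of the configuration states forces $D(\bd{n}_{\bd{i}})c_{\bd{i}}=0$ for all $\bd{i}$. Your remark about fixing the single basis $\mathcal{B}_1$ supplied by Theorem \ref{thm:pin2} throughout is exactly the right point of care in the degenerate case.
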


\subsection{Converse selection rule: Rationalizing pinning-based multiconfigurational ansatzes}\label{subsec:conv}
The remarkable implications of pinning as an effect in the one-fermion picture on the structure of the $N$-fermion quantum state
offers an alternative characterization of some existing variational post-Hartree-Fock ansatzes and suggests additional new ones: Each face $F$ of the polytope $\mathcal{P}$, as characterized by a certain number of saturated Pauli constraints, generalized Pauli constraints and ordering constraints $n_{i}-n_{i+1}\geq 0$, defines a state manifold $\mathcal{M}_F$ of quantum states. These are exactly those states $\ket{\Psi}\in \HN$ whose NONs map to the face $F$,
\begin{equation}\label{MF}
\mathcal{M}_F \equiv \big\{\ket{\Psi}\in \HN\,\big|\,\mbox{spec}^\downarrow(N \mbox{Tr}_{N-1}[\ket{\Psi}\!\bra{\Psi}]) \in F\big\}\,.
\end{equation}
Minimizing the energy expectation value of a given Hamiltonian $H$ of a system of interacting fermions over $\mathcal{M}_F$ then defines a variational scheme associated with the face $F$ with a corresponding variational energy
\begin{equation}\label{EMF}
E_{\mathcal{M}_F} \equiv \min_{\ket{\Psi} \in \mathcal{M}_F} \bra{\Psi}H \ket{\Psi}\,.
\end{equation}
From a qualitative point of view, one can say that the higher dimensional the face $F$, the higher dimensional the corresponding
state manifold $\mathcal{M}_F$ and thus the more computationally demanding the respective ansatz. Some well-known examples for such polytope face-associated variational schemes are the \emph{Complete Active Space Self-Consistent Field} (CASSCF) ansatzes (see, e.g., Refs.~\cite{CASSCF1,CASSCF2,CASSCF3,CASSCF4}). Indeed, according to Theorem \ref{thm:CAS} they can be characterized by the saturation of a certain number of Pauli exclusion principle constraints. Our main results, Theorems \ref{thm:pin1}, \ref{thm:pin2} and the respective selection rules, Corollaries \ref{cor:SR1}, \ref{cor:SR2}, highlight that even more elaborated variational ansatzes can be introduced by referring not only to the saturation of Pauli constraints but to extremal one-fermion information in general, i.e., pinned NONs.
The motivation for proposing such generalizations of CASSCF ansatzes is twofold. On the one hand, the study of smaller atoms \cite{CSQChem} has reveled that the GPCs have an additional significance for ground states beyond the one of the Pauli exclusion principle constraints, as quantified by the $Q$-parameter \cite{CSQ}. On the other hand, not all configurations $\bd{i}$ within a complete active space are relevant and it would be preferable to identify only the most significant ones. The gain in computational time could be used to increase the basis set size, allowing one to recover more of the dynamic correlation.

A comment is in order concerning the practical implementation of such variational schemes. After having fixed $F$, i.e.~the corresponding family of contributing configurations $\bd{i}$, one would minimize both the respective expansion coefficients $c_{\bd{i}}$ and the involved natural orbitals $\{\ket{j}\}_{j=1}^d$. Such variational approaches are known in quantum chemistry as \emph{Multiconfiguration Self-Consistent Field} (MCSCF)-ansatzes (see, e.g., the textbook \cite{pinkbook}). Yet, the stringent use of pinning-based variational ansatzes in the form \eqref{EMF} would be quite challenging and not particularly  efficient. This is due to the fact that the selection rule \ref{cor:SR1} defines $\mathcal{M}_F$ by referring to the self-consistent expansion \eqref{PsiNO}, i.e.~rather involved self-consistency conditions on the expansion coefficients $\bd{c}_{\bd{i}}$ would need to be imposed. From a converse point of view, an \emph{arbitrary} superposition of all allowed configurations $\bd{i}$ is typically not self-consistent. Hence, its relation to the face $F$ seems to be rather loose, since its vector $\bd{n}$ of non-increasingly ordered NONs lies actually in the interior of the polytope $\mathcal{P}$ rather than on the face $F$. To illustrate this, let us revisit Example \ref{ex:pin38}. We pick a random (real-valued) superposition of the allowed configurations listed in Example \ref{ex:pin38}:
$c_{123}=-0.2595, c_{156}=0.1877, c_{138}=-0.5043, c_{257}=$ $-0.1258, c_{578}=-0.0411, c_{248}=-0.6256, c_{147}=-0.0154, c_{267}=-0.1317, c_{678}=0.4660$. The corresponding vector $\bd{n}$ of decreasingly-ordered NONs follows as
\begin{eqnarray}\label{38NONs}
\bd{n} &=&  \{0.9418, 0.4140, 0.3914, 0.3569, 0.3215, 0.2696, 0.2521, 0.0527\} \nonumber \\
\downarrow & & \hspace{-0.65cm}\mbox{permutation}\,\pi \\
\bd{n}' &=& \{0.3569, 0.9418, 0.3215, 0.3914, 0.2696, 0.0527, 0.2521, 0.4140\} \nonumber\,.
\end{eqnarray}
For the GPC \eqref{GPC38} at hand, one finds $D(\bd{n})=1.0325>0$, i.e.~$\bd{n} \in \mathcal{P}$ lies far away from the polytope facet
defined by $D\equiv 0$. This is actually quite different for the vector $\bd{n}'$ obtained by permuting the NONs according to some specific permutation $\pi$. Of course, $\bd{n}'$ does not lie in the polytope $\mathcal{P}$ anymore since its entries are not properly ordered. Yet, by extending the face $D\equiv 0$ of $\mathcal{P}$ to a hyperplane in the space of all occupation number vectors (including the ones which are not decreasingly ordered), $\bd{n}'$ turns out to lie on that hyperplane, $D(\bd{n}')=D(\pi(\bd{n}))= 0.0000$. This is rather astonishing in particular since the one-particle reduced density matrices of such arbitrary superpositions are not diagonal in the original reference basis anymore. For instance, one finds for the superposition above $\bra{2}\rho_1 \ket{8}=-0.1870$. This surprising example has actually a deep origin:
\begin{theorem}[Converse selection rule]\label{thm:conv}
Let $F$ be a face of the polytope of the setting $(N,d)$ defined by the saturation of a specific family $\{D_k\}_{k \in K_F}$ of GPCs $D_k \geq 0$. For
an orthonormal basis $\mathcal{B}_1=\{\ket{j}\}_{j=1}^d$ of $\Ho$ we define
\begin{equation}
\mathcal{A}_F^{(\mathcal{B}_1)}:={\rm Span}\left\{\ket{\bd{i}}\,\big|\,\forall k\in K_F:\, D_k(\bd{n}_{\bd i})=0\right\}\,,
\end{equation}
i.e.~the vector space of all superpositions of configurations $\bd{i}$ fulfilling the selection rule \ref{cor:SR1} with respect to the basis $\mathcal{B}_1$ for all GPCs $D_k$ with $k \in K_F$.
Then, for any $\kpsi\in \mathcal{A}_F^{(\mathcal{B}_1)}$ there exists a basis $\mathcal{B}'_1$ of (possibly wrongly ordered) natural orbitals of $\ket{\Psi}$ such that all configurations $\bd{i}\in\mathrm{Supp}_{\mathcal{B}'_1}(\kpsi)$ also fulfil the selection rules $D_k(\bd{n}_{\bd{i}})=0$ for all $k \in K_F$. In particular, the corresponding vector $\bd{n}$ of (possibly wrongly ordered) NONs saturates $D_k\geq 0$ for all $k\in K_F$, i.e.~$\bd{n}$ lies on the hyperplane obtained by extending the face $F$ to non-decreasingly ordered occupation number vectors.
\end{theorem}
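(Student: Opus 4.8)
The plan is to reduce the statement to a single structural fact about the one-particle reduced density operator of states in $\mathcal{A}_F^{(\mathcal{B}_1)}$, together with the behaviour of the natural orbital induced operators $\hat{D}_{\mathcal{B}_1}$ (Definition~\ref{def:Lhat}) under orbital rotations. First I would record that $\mathcal{A}_F^{(\mathcal{B}_1)}$ is exactly the common kernel of the operators $\hat{D}_{k,\mathcal{B}_1}$, $k\in K_F$: expanding an arbitrary $\ket{\Psi}$ in the configuration basis $\mathcal{B}_N$ induced by $\mathcal{B}_1$ and using that $\hat{D}_{k,\mathcal{B}_1}$ is diagonal in $\mathcal{B}_N$ with $\hat{D}_{k,\mathcal{B}_1}\ket{\bd i}=D_k(\bd{n}_{\bd i})\ket{\bd i}$ (the analogue of \eqref{eq:Lhat-action} for the affine form $D_k$), one sees that $\hat{D}_{k,\mathcal{B}_1}\ket{\Psi}=0$ for all $k\in K_F$ holds precisely when every contributing configuration $\bd i$ obeys $D_k(\bd{n}_{\bd i})=0$ for all $k\in K_F$.

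Next I would partition the orbital index set $\{1,\dots,d\}$ into blocks $G_1,\dots,G_m$ by declaring $i\sim j$ iff $\kappa_k^{(i)}=\kappa_k^{(j)}$ for all $k\in K_F$, with the $\kappa_k^{(\cdot)}$ as in \eqref{GPC}. The crucial lemma is that for any $\ket{\Psi}\in\mathcal{A}_F^{(\mathcal{B}_1)}$ the 1RDM $\rho_1$ is block-diagonal with respect to this partition in the basis $\mathcal{B}_1$, i.e.~$\bra{i}\rho_1\ket{j}=0$ whenever $i\not\sim j$. This is where the selection-rule structure is used: $\bra{i}\rho_1\ket{j}=\bra{\Psi}f_j^\dagger f_i\ket{\Psi}$ receives contributions only from pairs of configurations $\bd a,\bd b\in\mathrm{Supp}_{\mathcal{B}_1}(\ket{\Psi})$ with $\bd b=(\bd a\setminus\{i\})\cup\{j\}$, so that $\bd{n}_{\bd b}=\bd{n}_{\bd a}-\bd{e}_i+\bd{e}_j$ and hence $0=D_k(\bd{n}_{\bd b})-D_k(\bd{n}_{\bd a})=\kappa_k^{(j)}-\kappa_k^{(i)}$ for every $k\in K_F$; this forces $i\sim j$, contradicting $i\not\sim j$, so no such pair exists.

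Given block-diagonality, I would diagonalise $\rho_1$ block by block: choose a unitary $U=\bigoplus_{a} U_a$ on $\Ho$ with each $U_a$ diagonalising the restriction of $\rho_1$ to $\mathrm{Span}\{\ket{j}\,:\,j\in G_a\}$, and set $\mathcal{B}'_1=\{\ket{j'}\}$ with $\ket{j'}\equiv U\ket{j}$. Then $\mathcal{B}'_1$ is a basis of natural orbitals of $\ket{\Psi}$ (the labelling, and hence the ordering of the NONs, may be scrambled, as the theorem allows). Now the covariance argument closes the proof: since $U$ preserves each block $G_a$, it commutes with every one-particle operator $h_k\equiv\sum_j\kappa_k^{(j)}\ket{j}\!\bra{j}$, so the induced Fock-space unitary $\mathcal{U}$, characterised by $\mathcal{U}f_j\mathcal{U}^\dagger=f_{j'}$, commutes with each $\hat{D}_{k,\mathcal{B}_1}=\kappa_k^{(0)}+\sum_j\kappa_k^{(j)}\hat{n}_j$. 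Combining this with the identity $\hat{D}_{k,\mathcal{B}'_1}=\mathcal{U}\,\hat{D}_{k,\mathcal{B}_1}\,\mathcal{U}^\dagger$ gives $\hat{D}_{k,\mathcal{B}'_1}=\hat{D}_{k,\mathcal{B}_1}$, hence $\mathcal{A}_F^{(\mathcal{B}'_1)}=\mathcal{A}_F^{(\mathcal{B}_1)}$ as subspaces of $\HN$ and in particular $\ket{\Psi}\in\mathcal{A}_F^{(\mathcal{B}'_1)}$. Applying the kernel characterisation of the first step now in the basis $\mathcal{B}'_1$, every $\bd i\in\mathrm{Supp}_{\mathcal{B}'_1}(\ket{\Psi})$ satisfies $D_k(\bd{n}_{\bd i})=0$ for all $k\in K_F$. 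Finally, since the associated vector of (possibly wrongly ordered) NONs is $\bd{n}=\sum_{\bd i}|c_{\bd i}|^2\,\bd{n}_{\bd i}$ with $\sum_{\bd i}|c_{\bd i}|^2=1$ (cf.~\eqref{NONgeom}, now in $\mathcal{B}'_1$) and each $D_k$ is affine, $D_k(\bd{n})=\sum_{\bd i}|c_{\bd i}|^2 D_k(\bd{n}_{\bd i})=0$, i.e.~$\bd{n}$ lies on the hyperplane obtained by extending $F$.

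I expect the substantive step to be the automatic block-diagonality of $\rho_1$; once the $i\leftrightarrow j$ swap argument is set up it is very short, so the result is not really guarded by a serious obstacle. The remaining work is bookkeeping rather than difficulty: consistently keeping track of the difference between ordered and unordered occupation-number vectors (all $\hat D_{k,\mathcal{B}_1}$ and the hyperplanes $D_k=0$ live on the unordered side), and verifying the covariance identity $\hat{D}_{k,\mathcal{B}'_1}=\mathcal{U}\hat{D}_{k,\mathcal{B}_1}\mathcal{U}^\dagger$ from the transformation law $\mathcal{U}f_j\mathcal{U}^\dagger=f_{j'}$ of the annihilation operators. It is also worth stating explicitly that diagonalising $\rho_1$ separately on each block $G_a$ already makes $\rho_1$ diagonal in $\mathcal{B}'_1$, so that there is no tension between the requirement that $\mathcal{B}'_1$ diagonalise $\rho_1$ and the requirement that $U$ be block-preserving.
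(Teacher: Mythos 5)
Your proposal is correct: the kernel characterisation of $\mathcal{A}_F^{(\mathcal{B}_1)}$, the block-diagonality of $\rho_1$ with respect to the level sets of the coefficient vectors $(\kappa_k^{(j)})_{k\in K_F}$ (proved via the $f_j^\dagger f_i$ swap argument), and the invariance $\hat{D}_{k,\mathcal{B}'_1}=\hat{D}_{k,\mathcal{B}_1}$ under block-preserving rotations together give a complete proof, including the final affine-averaging step for $D_k(\bd{n})=0$. Note that Part~I defers this proof to Part~II; the argument given there rests on exactly the same structural fact (block-diagonality of the one-particle reduced density matrix on the orbital classes with identical GPC coefficients), so your route is essentially the intended one.
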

To illustrate the first part of this theorem we revisit Example \ref{ex:pin38}. Let $\mathcal{B}_1\equiv \{\ket{i}\}_{i=1}^8$ be some orthonormal basis for $\Ho$ and consider the GPC $D \geq 0$ from Example \ref{ex:pin38}. The corresponding linear space $\mathcal{A}_F^{(\mathcal{B}_1)} $ follows as (where $F$ denotes the face defined by $D\equiv0$)
\begin{eqnarray}\label{span38}
\mathcal{A}_F^{(\mathcal{B}_1)} &=& {\rm Span}\Big\{\ket{1,2,3},\ket{1,5,6},\ket{1,3,8},\ket{2,5,7},\ket{5,7,8},\ket{2,4,8}, \ket{1,4,7}, \nonumber \\
 && \qquad  \ket{2,6,7}, \ket{6,7,8}\Big\}\,.
\end{eqnarray}
Let $\ket{\Psi} \in \mathcal{A}_F^{(\mathcal{B}_1)}$, i.e.~$\ket{\Psi}$ is a linear combination of the nine specific configuration states $\ket{i_1,i_2,i_3}$ shown in \eqref{span38}. As already explained above, the corresponding one-particle reduced density matrix $\rho_1$ of $\ket{\Psi}$ is in general not diagonal with respect to $\mathcal{B}_1$, i.e.~its natural orbital basis is different than $\mathcal{B}_1$.
Naively one may thus expect that the self-consistent natural orbital expansion \eqref{PsiNO} of $\ket{\Psi}$ would involve all 56 configurations. Yet, the first part of Theorem \ref{thm:conv} states that this is not the case. In particular, there exists a permutation of $\ket{\Psi}$'s ordered natural orbitals yielding $\mathcal{B}_1'\equiv \{\ket{j'}\}_{j=1}^8$ with the effect that only those $\ket{\bd{j}'}\equiv \ket{j_1',j_2',j_3'}$
contribute to $\ket{\Psi}$ which fulfil the selection rule $D(\bd{n}_{\bd{j'}})=0$.

The converse selection rule \ref{thm:conv} establishes a more flexible relation between quantum states and polytope faces $F$ since it does not
refer to the self-consistent expansion \eqref{PsiNO} anymore. In particular, it therefore provides a solid foundation for more effective pinning-based MCSCF ansatzes minimizing the energy expectation value of a Hamiltonian $H$ over all states in
\begin{equation}
  \mathcal{A}_F \equiv \bigcup_{\tilde{\mathcal{B}}_1}\mathcal{A}_F^{(\tilde{\mathcal{B}}_1)} = \left\{u^{\otimes^N}\ket{\Psi}\,\Big| \,\ket{\Psi}\in \mathcal{A}_F^{(\mathcal{B}_1)},\,u:\Ho\rightarrow \Ho\,\mbox{unitary}\right\} \,.
\end{equation}
Such ansatzes are indeed MCSCF ansatzes in a strict sense: In a first step, one identifies (via the choice of a face $F$) a specific set of configurations $\bd{i}$ contributing to $\ket{\Psi}$. Then, in a second step one minimizes the energy expectation value with respect to various expansion coefficients (without any additional constraints on them) and all possible orbital choices $\mathcal{B}_1$.
The corresponding variational energy
\begin{equation}\label{EMF}
E_{\mathcal{A}_F} \equiv \min_{\ket{\Psi} \in \mathcal{A}_F} \bra{\Psi}H \ket{\Psi} \leq E_{\mathcal{M}_F}\,.
\end{equation}
is at least as good as the original one ($E_{\mathcal{M}_F}$) and the computational effort is significantly reduced by omitting the quadratic self-consistency conditions required in the characterization of $\mathcal{M}_D$.

It will be one of the future challenging to implement and test such pinning-based MCSCF ansatzes (for a proof of concept see \cite{CSHF}). In particular, one needs to develop a systematic procedure for identifying the appropriate polytope faces $F$, e.g., in the form of a renormalization group-inspired scheme which exploits the inclusion hierarchy of faces of different dimensionalities \cite{CSMCSCF}.

\subsection{Presence of pinning reveals symmetries of quantum states}
According to Theorem \ref{thm:pin1} and its generalization including the case of degenerate NONs, Theorem \ref{thm:pin2}, pinning $D(\bd{n})=0$ implies that the $N$-fermion quantum state $\ket{\Psi}$ lies in the zero-eigenspace of the corresponding natural orbital induced operator $\hat{D}_{\mathcal{B}_1}$. This means nothing else than that $\hat{D}_{\mathcal{B}_1}$ is the generator of a continuous symmetry of $\ket{\Psi}$,
\begin{equation}
e^{i \theta \hat{D}_{\mathcal{B}_1}} \ket{\Psi} = \ket{\Psi}\,.
\end{equation}
This symmetry could be a hidden symmetry of the state $\ket{\Psi}$ itself or a symmetry of the Hamiltonian.

We present a prominent example for a pinned quantum state. For the Hubbard model with three sites and three electrons the ground state was shown to exhibit pinning \cite{CS2015Hubbard}. In the self-consistent expansion \eqref{PsiNO} it takes the form \eqref{PsiBDnone}. The corresponding natural orbitals are given by the following spin-momentum states $\ket{k\sigma}$ ($k=0,1,2, \sigma=\uparrow\!/\!\downarrow$)
\begin{equation}
\ket{1}=\ket{0\!\uparrow},\,\ket{2}=\ket{1\!\uparrow},\,\ket{3}=\ket{0\!\downarrow},\,\,
\ket{4}=\ket{2\!\uparrow},\,\ket{5}=\ket{2\!\downarrow},\,\ket{6}=\ket{1\!\downarrow}\,.
\end{equation}
The corresponding natural orbital induced operator thus reads
\begin{equation}
\hat{D}_{\Psi} = 2\mathds{1}-\hat{n}_1-\hat{n}_2-\hat{n}_4=2\mathds{1}-\hat{n}_{0\uparrow}-\hat{n}_{1\uparrow}-\hat{n}_{2\uparrow}=
\frac{\mathds{1}}{2}-\frac{\hat{S}_z}{\hbar}\,.
\end{equation}
The presence of pinning in the Hubbard trimer reflects the system's $SU(2)$-symmetry, generated by the total spin $\hat{S}_z$ along the $z$-axis. It would be interesting to explore the meaning of those symmetry operators, e.g.~for the harmonic trap systems shown to exhibit (approximate) pinning \cite{CSpair}.

\section{Summary and conclusion}\label{sec:conl}
The concept of active spaces simplifies the description of interacting quantum many-body systems by restricting to a neighbourhood of active orbitals around the Fermi level. The respective $N$-fermion wavefunction ansatzes can be characterized by the saturation of a certain number of Pauli constraints $0 \leq n_i \leq 1$, identifying the occupied core orbitals ($n_i=1$) and the inactive virtual orbitals ($n_j=0$).
By referring to the generalized Pauli constraints, completing Pauli's original exclusion principle, we have provided a natural generalization of the concept of active spaces: We have explained and comprehensively illustrated that the saturation of \emph{any} one-body $N$-representability condition defines a distinctive space of active electron configurations contributing to the wave function ansatz (see Theorems \ref{thm:pin1},\ref{thm:pin2},\ref{thm:conv} and the selection rules \ref{cor:SR1},\ref{cor:SR2}). In contrast to the traditional complete active spaces defined through the saturation of Pauli's exclusion principle constraints, the use of such generalized active spaces does not necessarily mean to neglect dynamical correlations since more orbitals may contribute while the number of contributing configurations is still restricted. In particular, the choice of appropriate generalized active spaces would identify in an efficient and systematic way the significant electron configurations (rather than taking all of them into account as in complete active space self-consistent field (CASSCF)-ansatzes).
The present Part I therefore provides the theoretical foundation for possible wavefuntion based methods exploiting the fruitful mathematical structure underlying the generalized Pauli constraints. From a practical point of view, to achieve the full potential of our more systematic multiconfigurational approach, more effort needs to be spent on the mathematical side to calculate the generalized Pauli constraints for larger system sizes.

Moreover, according to Theorems \ref{thm:pin1}, \ref{thm:pin2}, pinning as an effect in the one-particle picture reveals the presence of symmetries. Those could be global symmetries of the underlying Hamiltonian (as, e.g., for Hubbard model clusters) or symmetries of just the quantum state at hand. Consequently, the successful search of possible (quasi)pinning in quantum systems could reveal and characterize possible ground state symmetries.

\section*{Acknowledgments}
We thank D.Gross for helpful discussions. We also acknowledge financial support from the UK Engineering and Physical Sciences Research Council (Grant EP/P007155/1) and the German Research Foundation (Grant SCHI 1476/1-1) (CS), the National Science Centre, Poland under the grant SONATA BIS: 2015/18/E/ST1/00200 (AS),
the Excellence Initiative of the German Federal and State Governments (Grants ZUK 43 \& 81) (AL).
\\

\bibliographystyle{unsrt}
\bibliography{pin}

\end{document}